\newcommand{\ie}{{\em i.e., }}
\newcommand{\eg}{{\em e.g., }}
\algnewcommand{\LineComment}[1]{\State \(\triangleright\) #1}
\newcommand\overmat[2]{%
  \makebox[0pt][l]{$\smash{\color{white}\overbrace{\phantom{%
    \begin{matrix}#2\end{matrix}}}^{\text{\color{black}#1}}}$}#2}
\newtheorem{thm}{Theorem}
\newtheorem{lem}[thm]{Lemma}
\newtheorem{defn}[thm]{Definition}
\def\equationautorefname~#1\null{%
  (#1)\null
}
\algnewcommand\algorithmicforeach{\textbf{for each}}
\newcounter{myquote_counter}
\newsavebox{\myquotebox}
\newsavebox{\myquoterefbox}
\definecolor{quotebackcolor}{rgb}{0.9,0.9,1}
\newlength{\quotewidth}
\newlength{\quoterefindent}
\newcommand*{\scaledquotefactor}{0.95}
\newenvironment{myquote}[1]%
{\renewcommand*{\scaledquotefactor}{1.0}%
   \begin{lrbox}{\myquoterefbox}\begin{minipage}[b]{0.8\quotewidth}%
       \em #1%
   \end{minipage}%
   \end{lrbox}%
   \begin{lrbox}{\myquotebox}\begin{minipage}[b]{1.0\quotewidth} 
}%
{\end{minipage}\end{lrbox}%
\begin{center}%
 \mbox{\begin{minipage}{\quotewidth}
                        \scalebox{\scaledquotefactor}{\usebox{\myquotebox}}\\[1pt]
          \hspace*{\quoterefindent}\scalebox{\scaledquotefactor}{\usebox{\myquoterefbox}}
       \end{minipage}
      }      
\end{center}%
\par}
\begin{document}
%
\title{The Mathematical Foundations for Mapping Policies to Network Devices (\normalsize Technical Report)}

\author{\authorname{Dinesha Ranathunga\sup{1}, Matthew Roughan\sup{1}, Phil Kernick\sup{2} and Nick Falkner\sup{3}}
\affiliation{\sup{1}ACEMS, University of Adelaide, Adelaide, Australia}
\affiliation{\sup{2}CQR Consulting, Unley, Australia}
\affiliation{\sup{3}School of Computer Science, University of Adelaide, Adelaide, Australia}
\email{\{dinesha.ranathunga, matthew.roughan, nick.falkner\}@adelaide.edu.au, phil.kernick@cqr.com}
\vspace{-13mm}
}

\keywords{ Network-security, Zone-Conduit model, Security policy, Policy graph. }

\abstract{ A common requirement in policy specification languages 
is the ability to map policies to the underlying network devices.
Doing so, in a provably correct way, is important in a security policy context,
so administrators can be confident of the level of protection provided by the policies for their networks.
Existing policy languages allow policy composition
but lack formal semantics to allocate policy to network devices.\\
Our research tackles this from first principles: we ask how
network policies can be described at a high-level, independent of
firewall-vendor and network minutiae. 
We identify the algebraic requirements of the policy-mapping process
and propose semantic foundations to formally verify if a policy is implemented by the correct set of policy-arbiters.
We show the value of our proposed algebras in maintaining concise network-device
configurations by applying them to real-world networks. \vspace{-3mm}}

\onecolumn \maketitle \normalsize \vfill

\thispagestyle{plain}
\pagestyle{plain}




\section{\uppercase{Introduction}}
\label{sec:introduction}
\vspace{-2mm}
Managing modern-day networks is complex, tedious and error-prone. These networks are comprised of  a wide variety of devices, from switches and routers to middle-boxes  like firewalls, intrusion detection systems and load balancers. Configuring these heterogeneous devices with their potentially complex policies manually, box-by-box is impractical.

A better approach is a {\em top-down configuration} where device configurations are derived from a high-level user specification. 
Such specifications raise the level of abstraction of network policies above device and vendor-oriented APIs (\eg Cisco CLI, OpenFlow). 
Doing so, provides {\em a single source of truth}, so, security administrators for instance, can easily determine who gets in and who doesn't \cite{howe1996}. 

In recent years, several research groups have proposed high-level policy specification languages for both \ac{SDN} \cite{soule2014,reich2013,foster2010,prakash2015} 
and traditional networks \cite{bartal1999,guttman2005}.
A common requirement in these languages is the ability to map the abstract policies to the underlying physical network. Doing so accurately, is essential to:
\begin{itemize}
\item enforce policy correctly;
\vspace{-1mm}
\item track policy changes per network device, so, redundant policy updates can be avoided; 
\end{itemize}
\smallskip
But, mapping policies to the physical network devices has challenges:

\begin{itemize}
 
\item policy needs to be {\em decoupled} from the network (\ie free of vendor and network-centric minutiae). A policy shouldn't need to change when
the device vendor changes, or every time IP address changes occur;
\vspace{-1mm}
\item the mapping must be {\em formally verifiable}. Precise and unambiguous mathematical semantics would eliminate wishful thinking pitfalls in deploying policies to networks. These semantics give, for instance, security administrators assurance that their intended policies are enforced by the correct
firewalls, rendering the expected security outcome.
\vspace{-1mm}
\item policies can have complex semantics including node and link properties; and
\vspace{-1mm}
\item the mapping should be as efficient as possible.
\end{itemize}

The solution we propose provides a generic framework to map policies to network devices algebraically. 
We illustrate it's use by considering security policies, 
because incorrect deployment of these policies in domains such as \ac{SCADA} networks can result in
catastrophic outcomes including the loss of human lives!
However, the principles involved, can easily be extended to policies involving traffic measurement, QoS, load balancing and so on. 

In developing our algebras, we have derived the properties and constraints of sequentially- and parallelly-composed policies for various policy contexts.
These policy-composition semantics must be preserved, when mapping policy to devices, to ensure correct deployment.
Using an algebraic framework also makes the policy-mapping process efficient.

We will describe the implementation of our proposed algebras and demonstrate their use in deploying security policies to real SCADA networks.
Particularly, we will show it's value in maintaining a clear, concise set of firewall configurations.
Moreover, our approach allows administrators to conduct ``what if'' analysis by changing policy and/or network topology 
and observe their effect on the network devices required to implement the changes.
\vspace{-6mm}

\section{\uppercase{Background and Related work}}
\label{sec:background}
\vspace{-5mm}
\smallskip
\begin{myquote}{Bertrand Russel}
``The advantages of implicit definition over construction are roughly those of theft
over honest toil.''
\end{myquote}

The quote is salient because network installations are commonly built from {\em bottom-up}, 
\ie a network-device is purchased, and configurations written. The policy is the result of the configuration, which is the
consequence of a purchasing decision. So, the policy is implicitly defined as a result of many small decisions
that interact in complex ways. Instead, best-practice guides \cite{byres2005,isa2007} suggest designing the policy first, and only then
determining how to implement it. 

Solutions that employ such a {\em top-down} network configuration approach have been proposed in both 
\ac{SDN} \cite{soule2014,anderson2014,prakash2015} and traditional networks \cite{bartal1999,cisco2014b}. 
They allow creation and maintenance of a {\em single high-level network-wide policy}
(\ie source of truth), so network administrators can easily determine, for instance,
who gets in and who doesn't. These high-level policies should be decoupled from
network and device-vendor intricacies, so they capture the {\em policy intent} and not the {\em policy implementation}.

Capturing intent has several benefits: it allows policy to be distinguished from network to assist with change management;
it enables accurate comparison of organisational policies to industry best-practices to evaluate compliance; and
it allows policy semantics to be expressed without network intricacies like IP addresses.
But, most research towards high-level policy languages \cite{bartal1999,cisco2014b,soule2014}, still requires hostnames and/or IP addresses to be
explicitly input to the policy specification, to implement policy on a network instance.

If the high-level policy definition is built on {\em formal mathematical constructions}, then
there are no implicit properties and it provides a truly sound foundation for everything that follows. 
The formalism would allow construction of complex and flexible policies and support reasoning about the policies.
For instance, we could precisely compare a defined policy with industry best-practices in \cite{isa2007,byres2005} for compliance and reduce 
network vulnerability to cyber attacks \cite{ranathunga2015P}. 

It is equally important to map policy to devices using a formal approach.
For instance, we can be confident of the blanket of protection
provided for our network {\em if and only if} we could prove that an intended security policy is implemented by the correct set of network firewalls.
Existing top-down configuration languages \cite{bartal1999,prakash2015,cisco2014b} allow creation of network-wide high-level policies, but
lack means to allocate policy to network devices in provably correct way.

In our related work \cite{ranathunga2015V}, we have developed a high-level security policy specification 
that is network and device-vendor independent
to perform {\em top-down} configuration of network firewalls.
The language semantics allow these high-level policies to be
easily understood by humans. 
In this paper, we investigate the underlying requirements for
mapping high-level policy to network devices.
\vspace{-6mm}

\section{\uppercase{Abstract policies}}
\label{sec:abstractions}
\vspace{-2mm}
Abstractions are key to constructing high-level policies.
A good policy abstraction should capture the underlying 
concepts naturally and concisely. For instance, a policy may be arbitrated using one or more network devices.
A good abstraction should decouple {\em what} is arbitrated from {\em how} it is arbitrated.

A simple abstraction that is commonly used to decouple policy from the network is {\em (endpoint-group, edge)} \cite{ranathunga2015V,bartal1999,soule2014,prakash2015}.
An {\em endpoint} is a grouping of elements based on common physical or logical properties- \eg a subnet, a user-group, a collection of servers, {\em etc.}
An {\em endpoint} is also the smallest unit of abstraction a policy can be applied to.
An {\em edge} specifies the relationship between the endpoints, \ie it describes what communication is allowed between the endpoints.
An edge typically consists of a Boolean predicate (\ie classifier) that matches traffic packet header details to perform some action.
An edge from endpoint $S1$ to $S2$ with predicate $R$ means that a correct implementation should act in some specified way toward traffic from all members (\eg hosts) in $S1$ to those in $S2$ satisfying $R$.

Consider the example network in Figure \autoref{fig:network}, our high-level policy might be -- ``we want to measure all HTTPS flows from $S1$ to $S4$".
This intent can be expressed using the (endpoint-group, edge) abstraction as \verb|p:=S1| $\rightarrow$ \verb|S4 : collect https|, where $p \in \mathcal{P}$ is an element of the possible space $\mathcal{P}$ of policies.

We now want to map the policy to enforcing devices (\ie arbiters). It looks naively easy. Just implement the policy on arbiter $F$. However, the problem is in general more complicated.
For example, consider policy from $S1$ to $S2$. If arbiter $A$ captures flows described by policy $p_A$ and likewise $p_B$, then the combination is
$p_{A \cap B}$ because the policy expresses that flows be collected but load balancing across the two paths could result in either being used by a given packet.

For another example, consider policy from $S2$ to $S3$. If arbiters $C$ and $D$ capture flows described by policy $p_C$ and $p_D$ respectively, then
the combination would yield their {\em union} because both arbiters in the path are used for flow collection. 
So, when mapping policies to network devices, they must be composed in a way that preserve the topology-specific semantics.

We will, in general, denote the policy composition resulting from parallel routes as $p_A \oplus p_B$ and that resulting from serial routes as $p_A \otimes p_B$.
In this example, $p_A \oplus p_B=p_{A \cap B}$ and $p_C \otimes p_D=p_{C \cup D}$.
Note though, the meaning of $\oplus$, $\otimes$ are policy-context dependent.

The problem of mapping our high-level policy $p$ (from $S1$ to $S4$) to network devices, may even be more complicated than simply implementing the policy on $F$.
What happens if $F$ drops out? Surely we would still want to collect HTTPS flows that traverse the redundant paths from $S1$ and $S4$ (\eg path via $S2$ and $S3$). So, $p$ must also be mapped to the policy arbiters along the path $S1-S2-S3-S4$, to cater for this redundancy. 
But then, should we map the policy to all of these arbiters or a subset of them?

Given the parallel routes between $S1$ and $S2$ it is easy to see that we need to map $p$ to both $A$ and $B$ to preserve the semantics of $\oplus$.
But, when arbiters are in series (\eg $C,D$) we have the choice of mapping policy to both or just one to preserve the semantics of $\otimes$.
Mapping to both may be unnecessary and inefficient and arbiters may have limits on their capabilities.

\begin{figure}[t!]
\captionsetup{aboveskip=8pt}
\centering
\subfigure[Network topology.] 
{
	\includegraphics[scale=0.32]{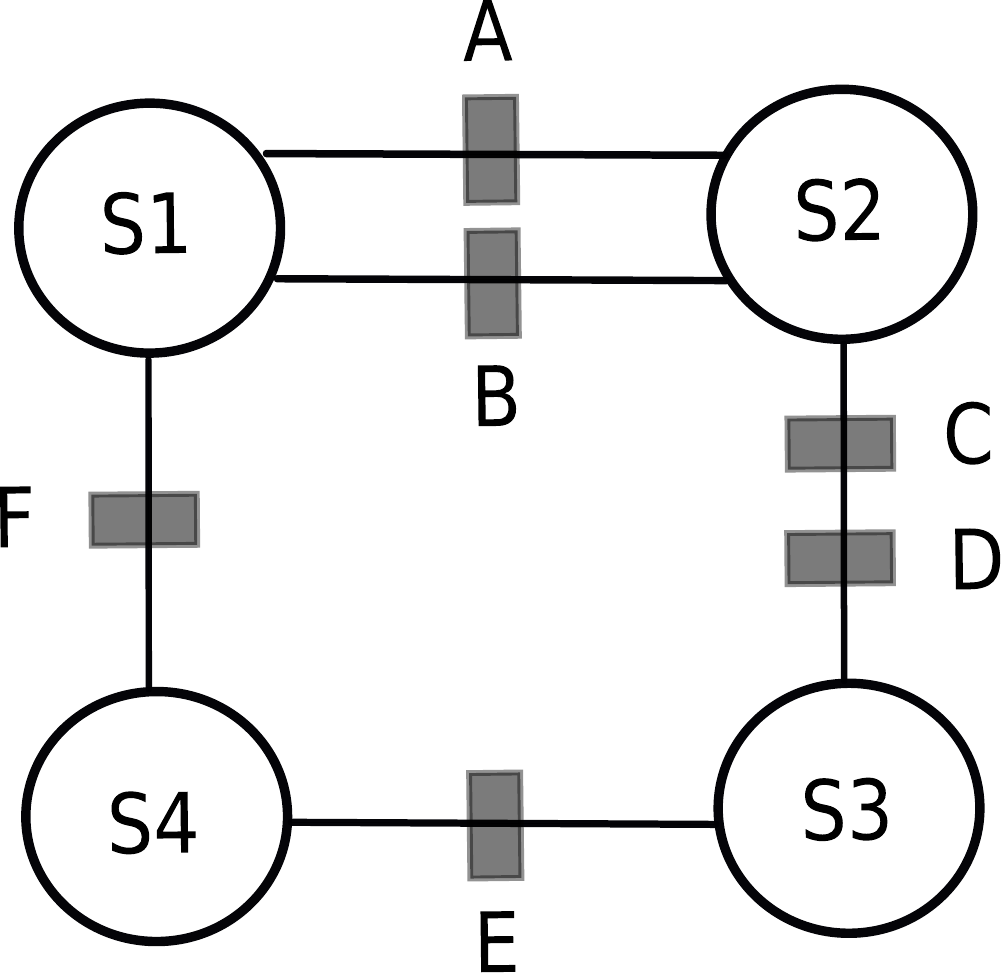}
	\label{fig:network}	
}
\hspace{0.1cm}
\subfigure[Policy graph of (a).] 
{
	\includegraphics[scale=0.32]{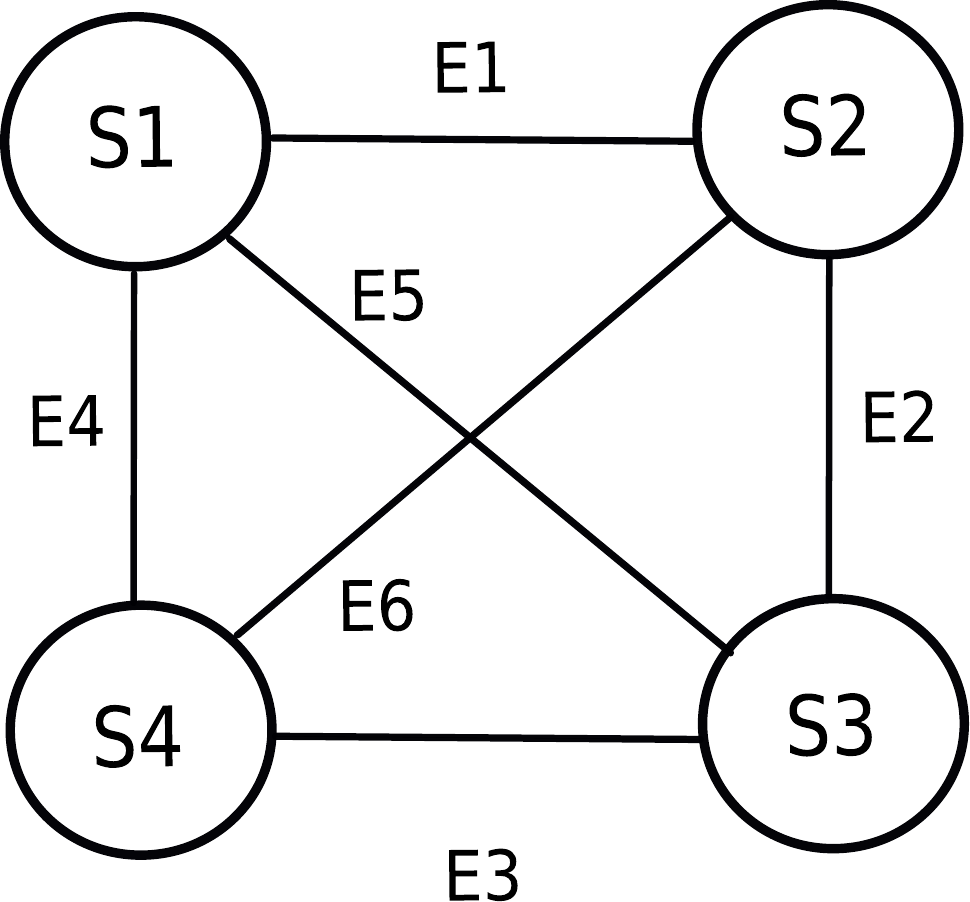}
	\label{fig:abs_model}
}
\caption{Example network consisting of four endpoints ($S1-S4$) and six policy arbiters ${A,B,...,F}$ (a). The corresponding policy graph is shown in (b) with these endpoints and 
policy edges ($E1,E2,...,E6$) between them.}
\label{fig:network_abs}
\end{figure}

\begin{figure}[t!]
\captionsetup{aboveskip=8pt}
\centering
\subfigure[Network of 3 endpoints connected to a single device.] 
{
	\includegraphics[scale=0.175]{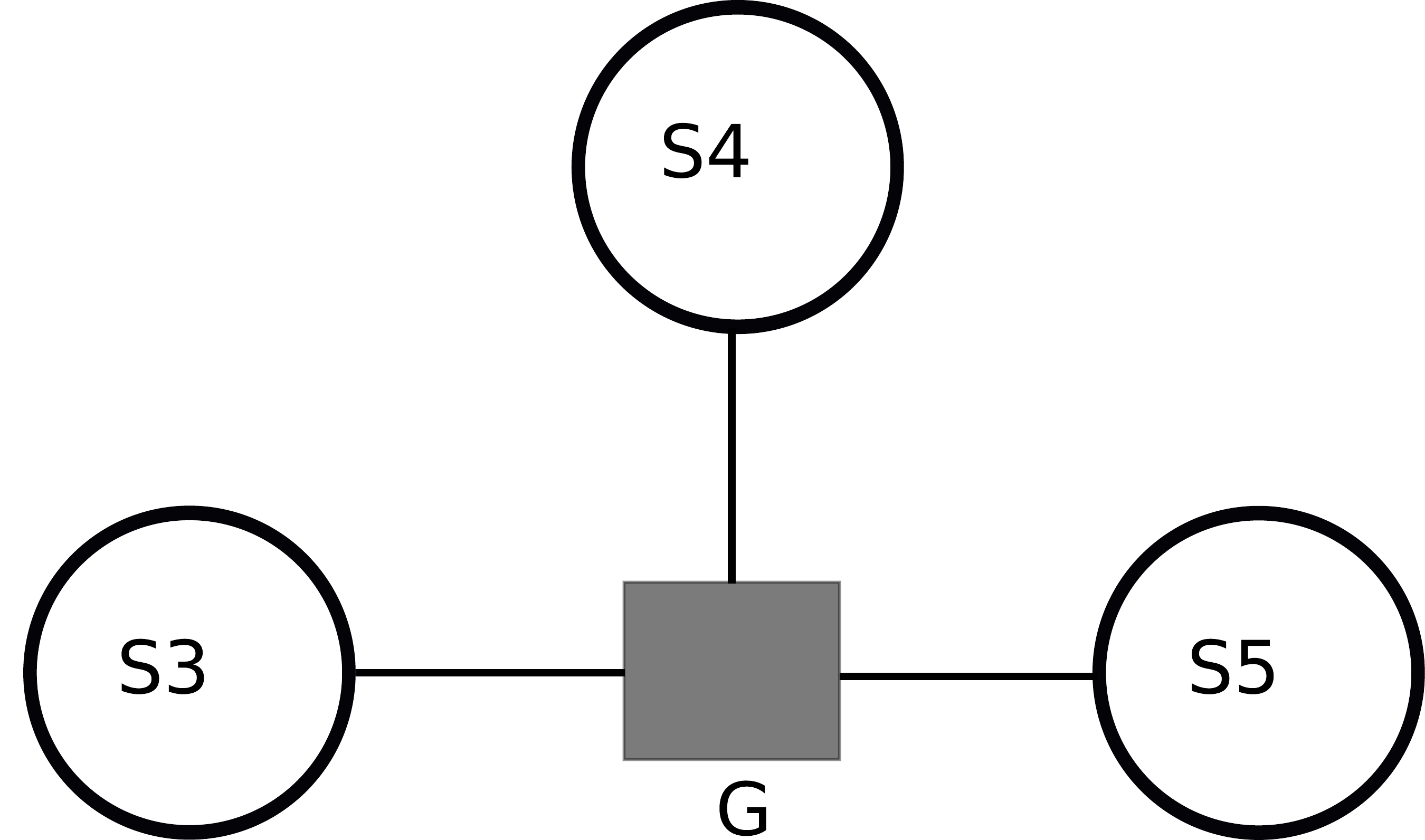}
	\label{fig:star}	
}
\hspace{0.3cm}
\subfigure[Policy hyper-graph model of (a).] 
{
	\includegraphics[scale=0.215]{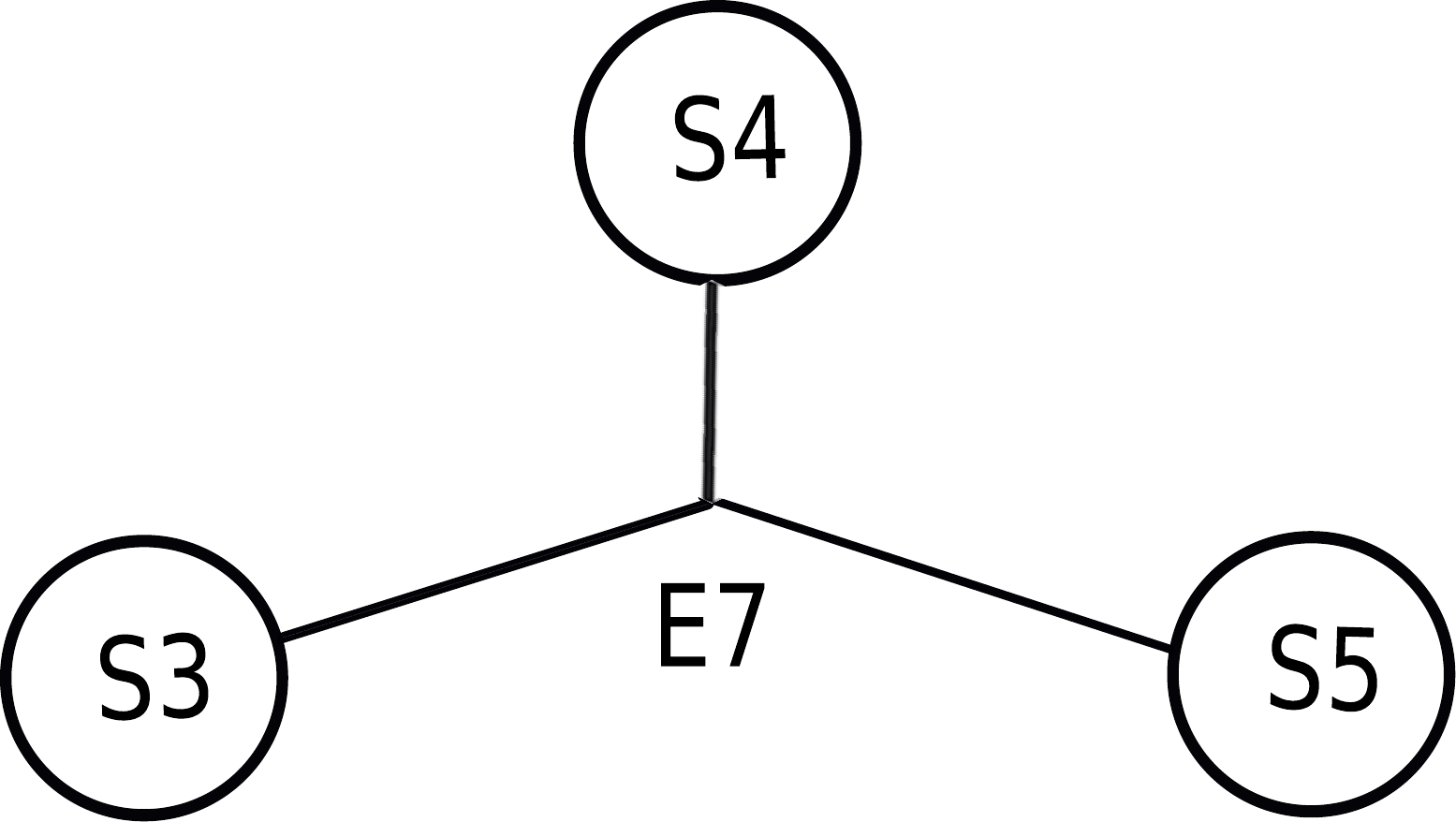}
	\label{fig:hyper}
}
\hspace{0.15cm}
\subfigure[Policy simple-graph model of (a).] 
{
	\includegraphics[scale=0.215]{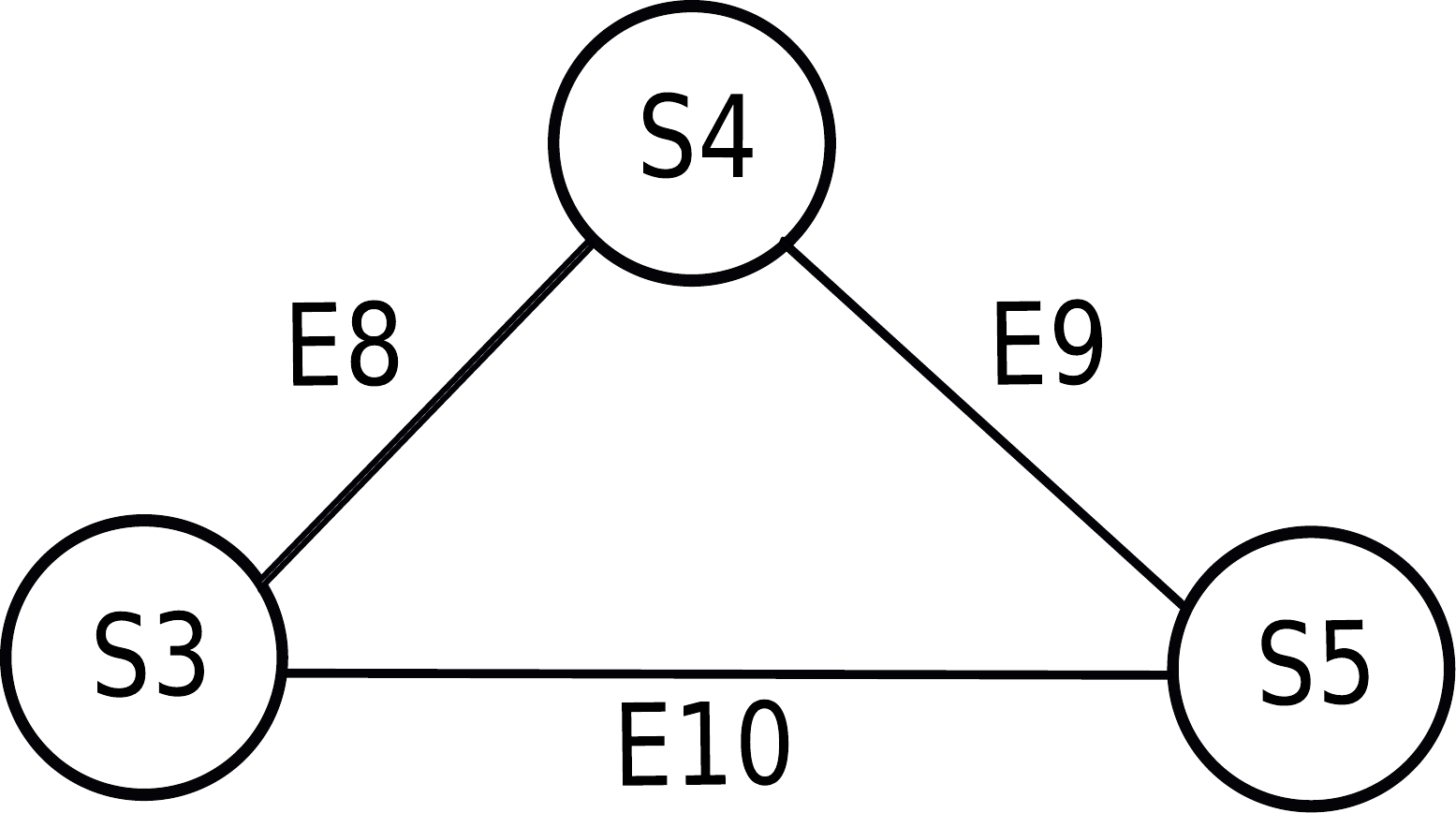}
	\label{fig:simple}
}
\caption{Policy-graph definition alternatives.}
\label{fig:alternat}
\end{figure}

The policy arbiters may not always be in series or parallel. For instance, consider the star-topology involving the endpoints $S3,S4$ and $S5$ in Figure \autoref{fig:star}. The question of how to map policy to topology gets interesting in this case. We could loosely map policy to arbiter $G$ using a {\em hyper-edge} $E7$ (Figure \autoref{fig:hyper}).
But, doing so adds to the complexity of the policy-graph and decreases it's precision.
For instance, it allows a $1:n$ relationship between hyper-edges and policies, so, we must track the policy of {\em every endpoint-pair} in a hyper-edge
to correctly map it's policy to network devices. Also, a hyper-edge could potentially map to a large portion of the network (\eg when interconnected to many endpoints), so, it's policy can be complex, difficult to understand and debug.

We can overcome these problems by using a simple-edge mapping instead (Figure \autoref{fig:simple}).
By doing so, we also gain significant benefits.
For one, the policy graph is simplified and it's precision is increased-- the $1:1$ relationship between edges and policies now mean we only need to track a single policy for each edge, to map policy accurately to devices. For another, our simplified abstraction {\em (endpoint-pair, edge)} 
behaves as a network-slice \cite{gutz2012}-- a piece of the network that can be programmed independently from the rest of the network -- so, debugging policy errors is easy. Moreover, users from distinct policy subdomains (\eg Engineering, Corporate Administration) can manage the network-slices (\ie policy-edges) applicable to their domain. 

We may also need to consider paths that consist of one more intermediate endpoints,
when implementing policy between an endpoint pair.
For instance, consider implementing a security policy $f:=$\verb|S1| $\rightarrow$ \verb|S3: http| in the network in Figure \autoref{fig:network}.
Once implemented on the arbiters (\ie firewalls), the policy should allow HTTP traffic flow from \verb|S1| to \verb|S3|.
But, the policy can only be deployed correctly if endpoints \verb|S2,S4| can route or forward HTTP traffic.

This ability of an endpoint to route or forward traffic can be restrictive.
For instance, in a security policy context, an endpoint can group systems (\eg hosts) with similar security requirements \cite{isa2007}.
So, high-risk systems can be grouped in to one endpoint and only {\em safe} traffic that originate and/or terminate at the endpoint is permitted by the security policy. 
Enabling this endpoint to transit-traffic could potentially expose the high-risk systems within to cyber attacks.
So, an endpoint's {\em traffic transitivity} capability must be considered in constructing valid device-paths between endpoints and
captured explicitly in the mapping process.

The intent of our high-level policy $p$ is to collect HTTPS flow data from $S1$ to $S4$.
Similarly, traffic measurement policies can be defined between any pair of endpoints in the network, so the corresponding
(endpoint-pair,edge) tuples collectively construct a {\em policy-graph} (Figure \autoref{fig:abs_model}).
Each (logical) edge in this graph maps to a physical network path comprising of policy arbiters and zero or more intermediate endpoints.
For instance, policy-edge $E4$ in Figure \autoref{fig:abs_model} implements our policy $p$.


So far, we have described several key requirements of a policy-to-device mapping process.
We next illustrate these using more detailed examples involving a typical set of policies found in practice.
The examples also show the use of our {\em (endpoint-pair, edge)} abstraction in high-level policy description.
\vspace{-3mm}
\subsection{Quality of Service (QoS) Policies}
\vspace{-3mm}
QoS policies can be used, for instance, to provide bandwidth guarantees for certain types of traffic.
The policy arbiters in this context, would commonly be QoS capable routers or switches.
Imagine we need to provision a minimum bandwidth of 100MB/s for HTTP traffic flow from $S1$ to $S4$ in Figure \autoref{fig:network}.
The high-level policy intent can be expressed using the endpoint-pair $S1$, $S4$ and an edge ($E4$) between them: \verb|min(S1| $\rightarrow$ \verb|S4 : http, 100MB/s)|.
In reality, similar QoS policies between any endpoint-pair can be expressed using the same policy-graph in Figure \autoref{fig:abs_model}.

Parallel and serial (QoS-device) topologies also have an impact on the end QoS policy.
Consider the parallel topology between $S1$ and $S2$ in Figure \autoref{fig:network}, if we assume $p_A$ and $p_B$ provide bandwidth guarantees
of $B1$ and $B2$ respectively, then with load balancing, the resultant QoS policy ($p_R$) 
can provide a total bandwidth guarantee $p_R=p_A \oplus p_B = sum(B1,B2)$.

When the devices are in series (\eg topology between $S2$ and $S3$ in Figure \autoref{fig:network}), the resultant policy provides a bandwidth guarantee given by $p_C \otimes p_D=min(B3,B4)$ where $B3,B4$ are the bandwidth guarantees provided by $p_C$ and $p_D$.

This example highlights how policies can be composed using the generic semantics $\oplus$ and $\otimes$ in different policy contexts.
The actual meaning of these operators is policy-type dependant.
For instance, in the traffic measurement policy example, $\otimes$ represented {\em union} while here
it has meaning of {\em minimum}.

We can see that endpoint traffic-transitivity also impacts the ability to implement a QoS policy correctly in the network.
For instance, consider policy \verb|min(S1| $\rightarrow$ \verb|S3 : http, 50MB/s)|, the required bandwidth
guarantee can only be provided if $S2,S4$ routes or forwards traffic to $S3$.
\vspace{-3mm}
\subsection{Security Policies}
\label{sec:sec}
\vspace{-3mm}
The security policies we consider here are access-control policies in a network.
The policy arbiters in this context, would be traffic filtering devices (\eg firewalls, SDN switches).
The endpoints could be zones or user groups.
Imagine we want to enable only SSH traffic from $S1$ to $S4$ (Figure \autoref{fig:network}). The high-level policy can be expressed using the endpoint-pair $S1$, $S4$ and an edge between them as \verb|S1| $\rightarrow$ \verb|S4: ssh| with an
implicit \verb|deny-all| in place.

When the traffic filtering devices are in parallel (\eg topology between $S1$ and $S2$ in Figure \autoref{fig:network}), the resultant security policy ($p_R$), has meaning- {\em all packets that can {\em possibly} be allowed through} -- and is the {\em union} of the packet sets allowed through by the individual devices. We take union conservatively because intrusions and attacks are usually carried out through permitted traffic.
So, if $p_A$ and $p_B$ allows packet sets $Q$ and $T$ respectively, then $p_R=p_A \oplus p_B = p_{Q \cup T}$.

When the devices are in series (\eg topology between $S2$ and $S3$), the resultant policy permits the {\em intersection} of the packet sets $V,W$ allowed by the policies of $C$, $D$ respectively -\ie  $p_R=p_C \otimes p_D = p_{V \cap W}$. 

With security policies, it is often useful to have endpoints that group systems with similar security requirements.
Doing so, allows to define a {\em single policy} for all members of an endpoint-- a clean and concise abstraction of network security.
But, a generic (endpoint-pair, edge) abstraction cannot capture such network-security specific policy concepts precisely.

So, we need concrete definitions for what an endpoint and an edge means for each policy context
to capture policy-specific intricacies.
We will show later how to define these concretely for security policies.
\vspace{-3mm}
\subsection{Traffic Measurement Policies}
\label{sec:measure}
\vspace{-3mm}
Traffic measurement policies can be used to implement, for instance, NetFlow (v9) on the example network shown in Figure \autoref{fig:network}.
The policy arbiters ${A,B,...,F}$ here, would be NetFlow capable devices (usually routers or switches).
The endpoints could be subnets or zones in the network.

When we considered the parallel routes between $S1$ and $S2$,
the resultant policy ($p_R$) with meaning--
{\em the flows that are guaranteed to be captured by the topology (without sampling)},
constitutes of the {\em intersection} of the packet sets of $A$ and $B$.
So, if $p_A$ and $p_B$ capture packet sets $Q$ and $T$ respectively, then $p_R=p_A \oplus p_B = p_{Q \cap T}$.

We also showed that when the devices are in series (\eg topology between $S2$ and $S3$ in Figure \autoref{fig:network}), the resultant policy captures the {\em union} of the packet sets $V, W$ captured by the policy of the devices $C$ and $D$ respectively, \ie  $p_R=p_C \otimes p_D = p_{V \cup W}$. 

\smallskip
\smallskip
\noindent \textbf{\em Policy Mapping Vs Routing:} We can see that in mapping policies to network devices, our discussion above relates to network paths or routes.
This is no accident. The problem we aim to solve has many parallels with routing. But here,
\begin{itemize}
\item{we often look for all paths, not just best paths.}
\vspace{-1mm}
\item{we do not need to employ decentralised protocols. A logically-centralised implementation can offer a robust and efficient solution
as demonstrated in SDN \cite{anderson2014,foster2010,soule2014,prakash2015}.}
\end{itemize}

The aim in routing is to determine the path that optimises a given path-metric (\eg shortest-path routing finds
a path between endpoints with a minimum distance). Our target problem is different: we need to determine the set of arbiters in a network a given policy should be implemented on. So, constructing all feasible paths is crucial because, for instance, a security policy between two endpoints can only be correctly implemented (\eg to block all Telnet traffic) if all redundant paths between them are taken into account.

We also expect a relatively low number of endpoints (\eg in \autoref{sec:example} we will see this number is typically $< 25$ in a SCADA network),
through network grouping. As we will show in \autoref{sec:imp}, a low endpoint-count makes our policy-mapping algorithm computationally tractable.
So, a logically-centralised implementation can be considered. But, in routing, the number of endpoints (\eg individual gateways) can be high for a large network, so, distributed means to computing a solution is essential (requiring de-centralised protocols like OSPF, BGP) . Our policy-mapping algorithm can also be implemented distributedly if necessary.

Algebras have also been proposed in meta-routing \cite{griffin2013}, to provably choose paths that optimise various path-metrics.
But, these algebras {\em do not support specification of node properties}
such as traffic transitivity. Specification of link properties is possible and we could translate node properties to
link properties, but, the workaround is not elegant.
A better approach is to have an algebra that capture node properties explicitly.

\smallskip
\noindent We described in \autoref{sec:sec}, the need to define endpoints and edges concretely for each policy context.
By doing so, we can incorporate the intricacies associated with each policy type.
We illustrate the idea next using security policies and their concretised (endpoint-pair, edge) abstraction- {\em the Zone-Conduit model}.

\smallskip
\noindent \textbf{\em The Zone-Conduit Model}:
Lack of internal-network segmentation is a key contributor to the fast propagation of threats and attacks in a network \cite{byres2005}.
So, ANSI/ISA have proposed the {\em Zone-Conduit model} as a way of segmenting and isolating the various sub-systems in a SCADA network \cite{isa2007}.

A \textit{zone} is a logical or physical grouping of an organisation's
systems with similar security requirements so that a {\em single policy}
can be defined for all zone members. A \textit{conduit}
provides the secure communication path between two zones, enforcing
the policy between them \cite{isa2007}.
A conduit could consist of
multiple links and firewalls but, logically, is a single connector.

It is easy to see that the Zone-Conduit model is a concrete instance of the (endpoint-pair, edge) abstraction for high-level security policy specification.
The zones and conduits are concrete definitions of endpoints and edges.
In defining them, important network-security characteristics such as {\em a single zone-policy}, are captured concisely.

The Zone-Conduit model abstracts {\em how} security policy is enforced, so users can focus on
{\em what} policy to enforce.
The model intuitively decouples policy from topology, so, high-level security policies 
can be described free of network and vendor intricacies. 

The ISA Zone-Conduit model in its original description lacks precision for policy specification. 
We use the extensions proposed in
\cite{ranathunga2015} to increase its precision, \eg we add Firewall-Zones to specify firewall-management policies.

\begin{figure*}[t!]
\captionsetup{aboveskip=8pt}
\centering
\subfigure[Zone-Firewall model.] 
{
	\includegraphics[scale=0.36]{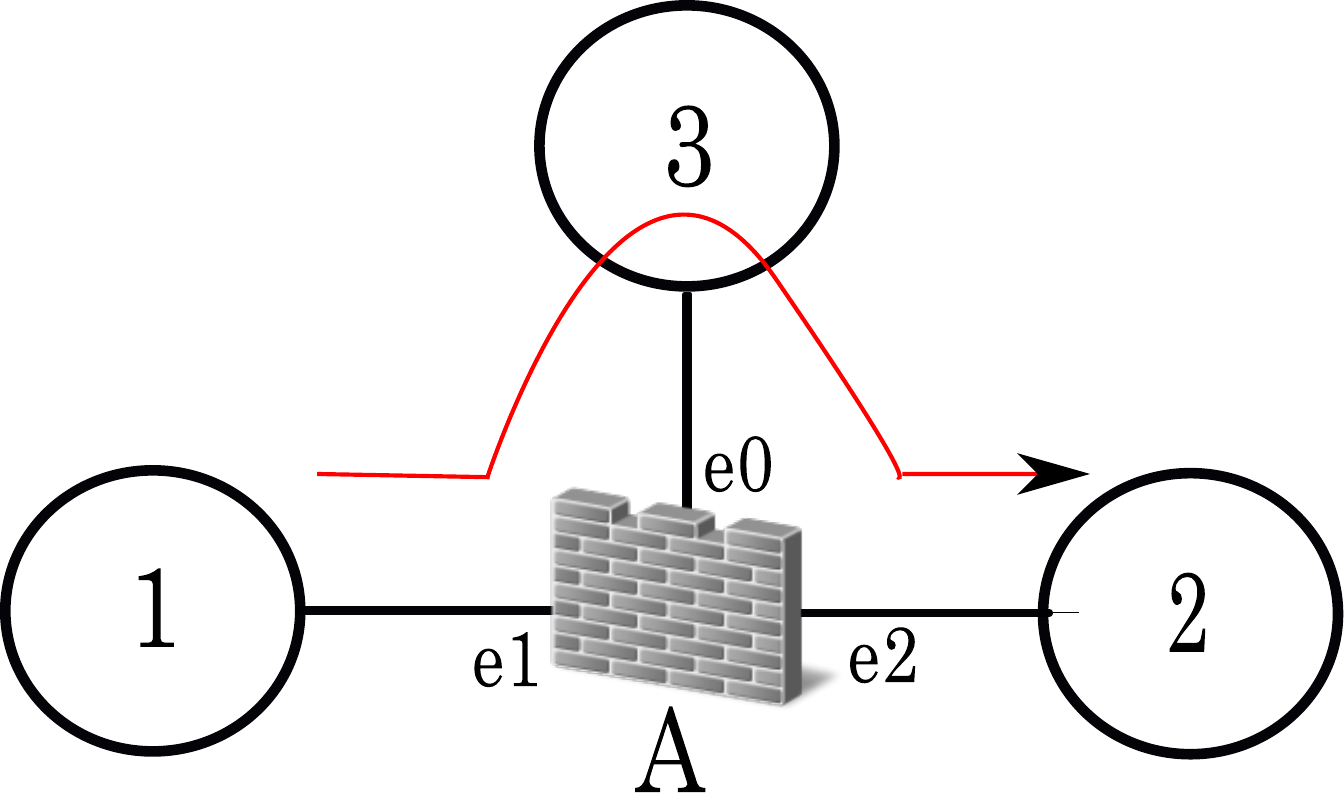}
	\label{fig:vc1}
}
\hspace{0.8cm}
\subfigure[Firewall-path model of (a).] 
{
	\includegraphics[scale=0.37]{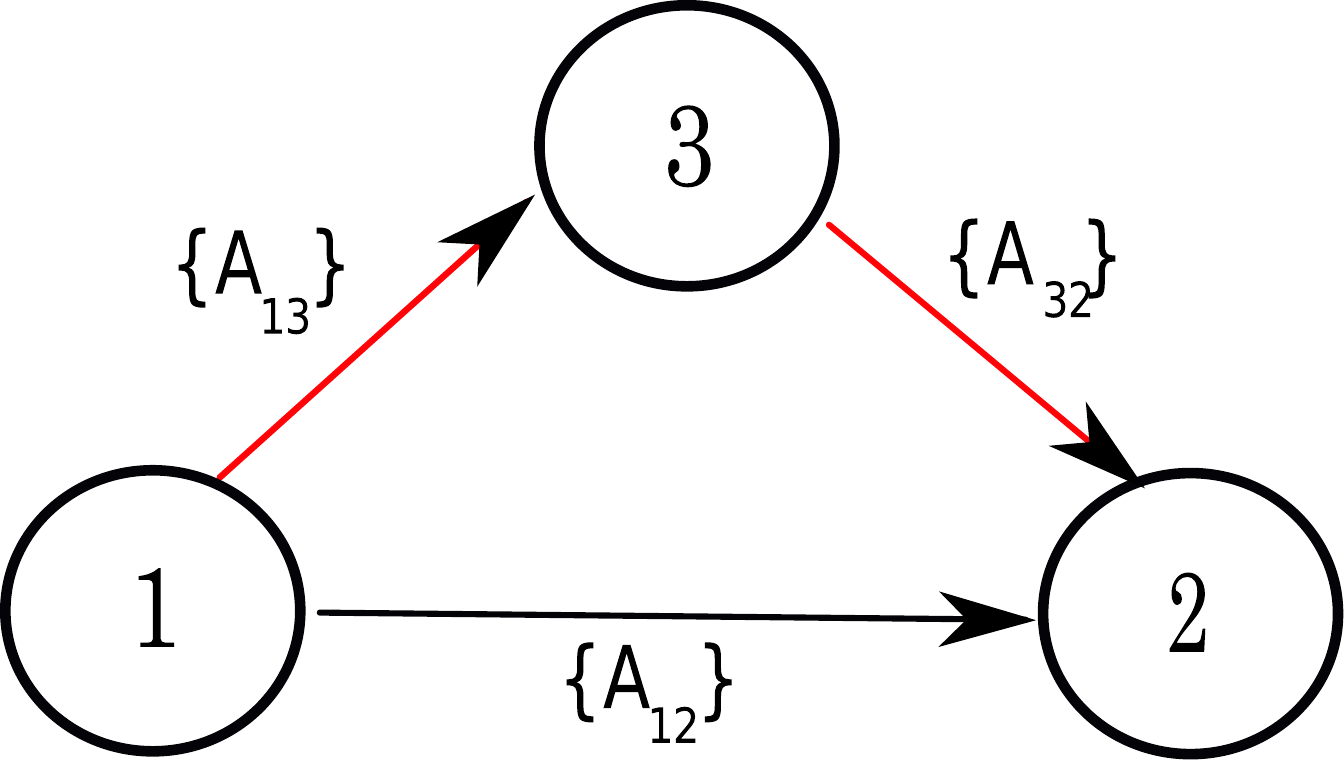}
	\label{fig:vcf1}
}
\caption{Example of traffic packets traversing a single firewall interface twice. As shown by the path in red, if traffic from zone 1 to 2 is allowed to transit zone 3, 
firewall interface e0 is traversed twice. In this situation, the direct path from zone 1 to 2 is more efficient. So, we deem firewall path $A_{13}A_{32}$ invalid (\ie =$\phi$), since
$h(A_{13})=h(A_{32})=A$, where $h: F \rightarrow W$.}
\label{fig:fwloop}
\end{figure*}

\smallskip
\noindent \textbf{\em Zone-Conduit Policies}:
A conduit policy is essentially an ordered set of rules $[p_1,p_2,...,p_n]$ that act on packet sequences to accept, deny, or in some cases, modify them.
In our related work \cite{ranathunga2015V,ranathunga2015P}, we have identified properties and constraint required in a Zone-Conduit based policy description to make these rules vendor and device (\ie implementation) independent
and have rule-order independent semantics. 

To summarise, we adopt a security whitelisting model, \ie we restrict policies to express positive abilities\footnote{Refers to the ability to initiate or accept a traffic service.} and {\em deny all} inter-zone flows that are not explicitly allowed \cite{ranathunga2015V}. 
Doing so, renders the rule order irrelevant in a policy, 
allowing the policy to be converted consistently to different vendor firewalls. So, the underlying vendor can change without requiring policy alterations. Policy makers can also add or remove policy rules oblivious to the ordering of the rules. 
By being explicit, we also prevent services being accidentally enabled implicitly. 

\smallskip
\noindent \textbf{\em Directed- vs Undirected-Conduit Policy}:
The policy on an undirected-conduit can be expressed using two {\em directed-conduit} policies.
Directed-conduits are important in understanding how policy should
be implemented on device interfaces. For instance, consider implementing
a security policy $p:=Z1 \rightarrow Z2: https$ on the network shown in Figure \autoref{fig:vc1}.
An undirected-conduit can only map the policy to firewall interfaces $e1,e2$.
But, to implement the policy correctly, we additionally need to know if it should be implemented 
{\em inbound} or {\em outbound} on these interfaces. A directed-conduit provides this directionality.

However, analysis using directed-conduits can also lead to problems. For instance,
it seems feasible to implement our policy $p$ along the path highlighted in red in Figure \autoref{fig:vcf1}.
But further inspection of Figure \autoref{fig:vc1}, reveals that traffic packets traversing this path 
would require to traverse firewall interface $e0$ twice.


We invalidate such directed-conduit paths via a mapping
$h: F \rightarrow W$ where $F=\{directed \ firewalls\}$ and $W=\{firewalls\}$.
A directed-firewall in the Zone-Conduit graph $G=(Z,C)$ is defined as
\vspace{-2mm}
\begin{defn}[Directed firewall]
  A {\em directed firewall} $t_{ij}$ is a firewall $t \in W$ that filters traffic on directed-conduit $(i,j) \in C$.
  \label{def:directed_firewall1}
\end{defn}
\vspace{-2mm}
\noindent Then we can check if the directed-firewalls in a path map to the same physical firewall
(\ie $h(A_{13})=h(A_{31})=A$) and deem that path invalid.

\vspace{-7mm}

\section{\uppercase{Mapping Algebra}}
\label{sec:algebra}
\vspace{-3mm}
We now outline our proposed algebras to map high-level policy to network devices.
We start by making the distinction between {\em primary} and {\em secondary}
policy-edges, and then later show how the latter can be derived algebraically using the former.
\vspace{-2mm}
\subsection{Firewall-path Construction}
\label{sec:encoding}
\vspace{-3mm}
As we described earlier, a policy-edge enforces policy between two endpoints in a network.
A policy-edge can be classified as a {\em primary-} or a {\em secondary-edge} depending what
arbiters are implemented within.
A primary-edge can enforce policy {\em only} using policy arbiters.
A secondary-edge enforces policy using both arbiters and one or more additional endpoints.

We demonstrate the idea using security policies and the simple Zone-Conduit graph $G=(Z,C)$ in Figure \autoref{fig:vc50}. 
The firewall composition of each primary-conduit in the model is shown (Figure \autoref{fig:vcf51}).
The example secondary-conduit $C_{13}$ filters traffic flow from zone \verb|1| to \verb|3|, using several directed-firewalls
and transit zones \verb|2|, \verb|4|.
The set of all directed-conduits are given by $DC=\{C_{ij} \ | \ (i,j) \in C \}$.

A policy-graph is essentially an automation that moves traffic packets from 
one endpoint to another using policy-edges. So, regular expressions (the natural language of finite automata),
can capture the packet-processing behaviour of this model; a path encoding is a concatenation of directed-devices (\ie policy-arbitration steps $pq$) 
and a set of paths is encoded as a union of paths.
Past work \cite{anderson2014} has shown, all single-path encodings stem from the Kleene star operator (*) on the set of directed-devices.

In a security policy context, the automation means that a single firewall-path encoding is a concatenation of directed-firewalls.
Each path depicts a sequence of traffic filtering steps and is an element of $F^*$.

\begin{figure*}[t!]
\captionsetup{aboveskip=8pt}
\centering
\subfigure[Zone-Conduit model with primary and secondary conduits.] 
{
	\includegraphics[scale=0.4]{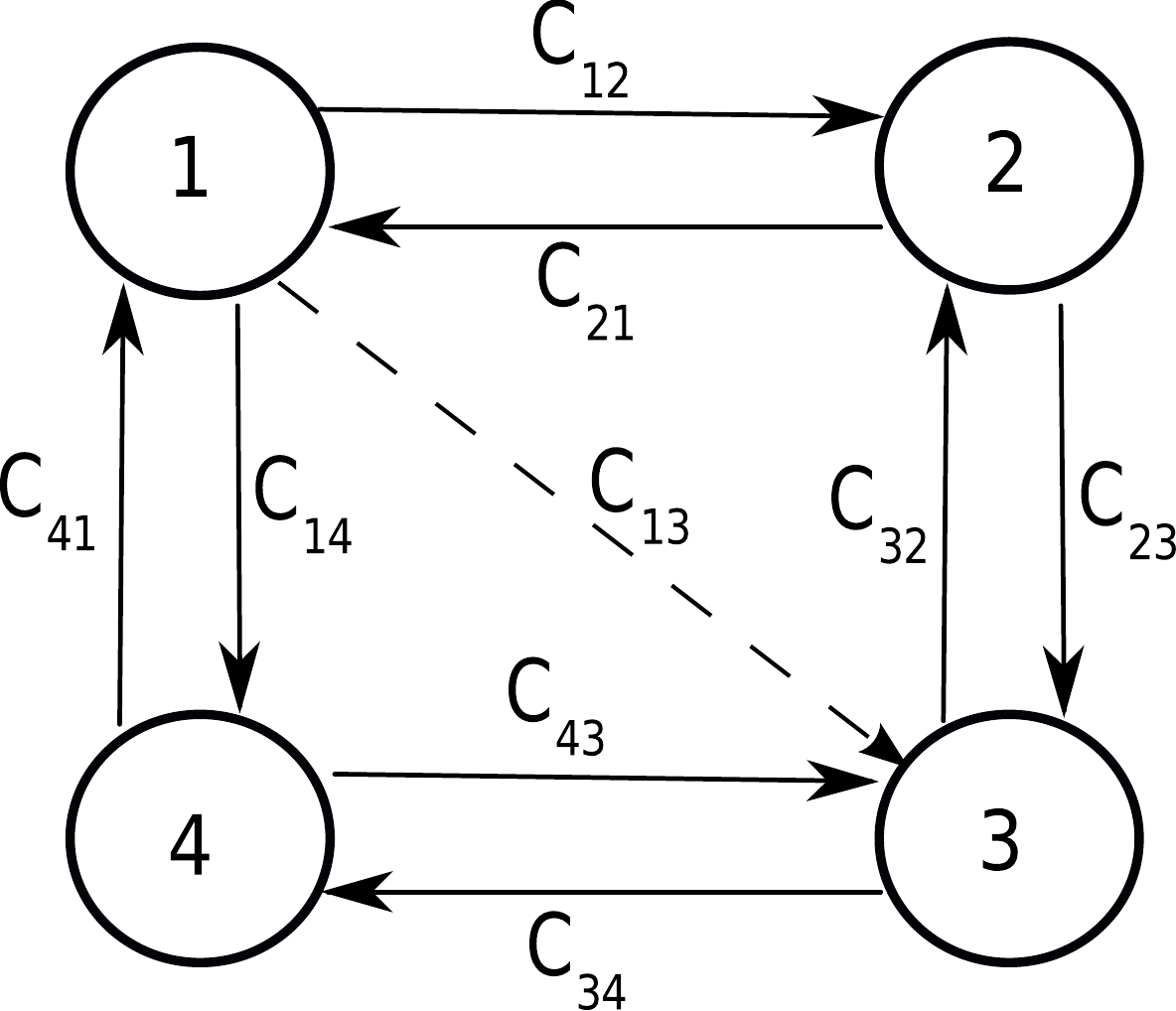}
	\label{fig:vc50}
}
\hspace{0.8cm}
\subfigure[Firewall-paths of the primary-conduits in (a).] 
{
	\includegraphics[scale=0.4]{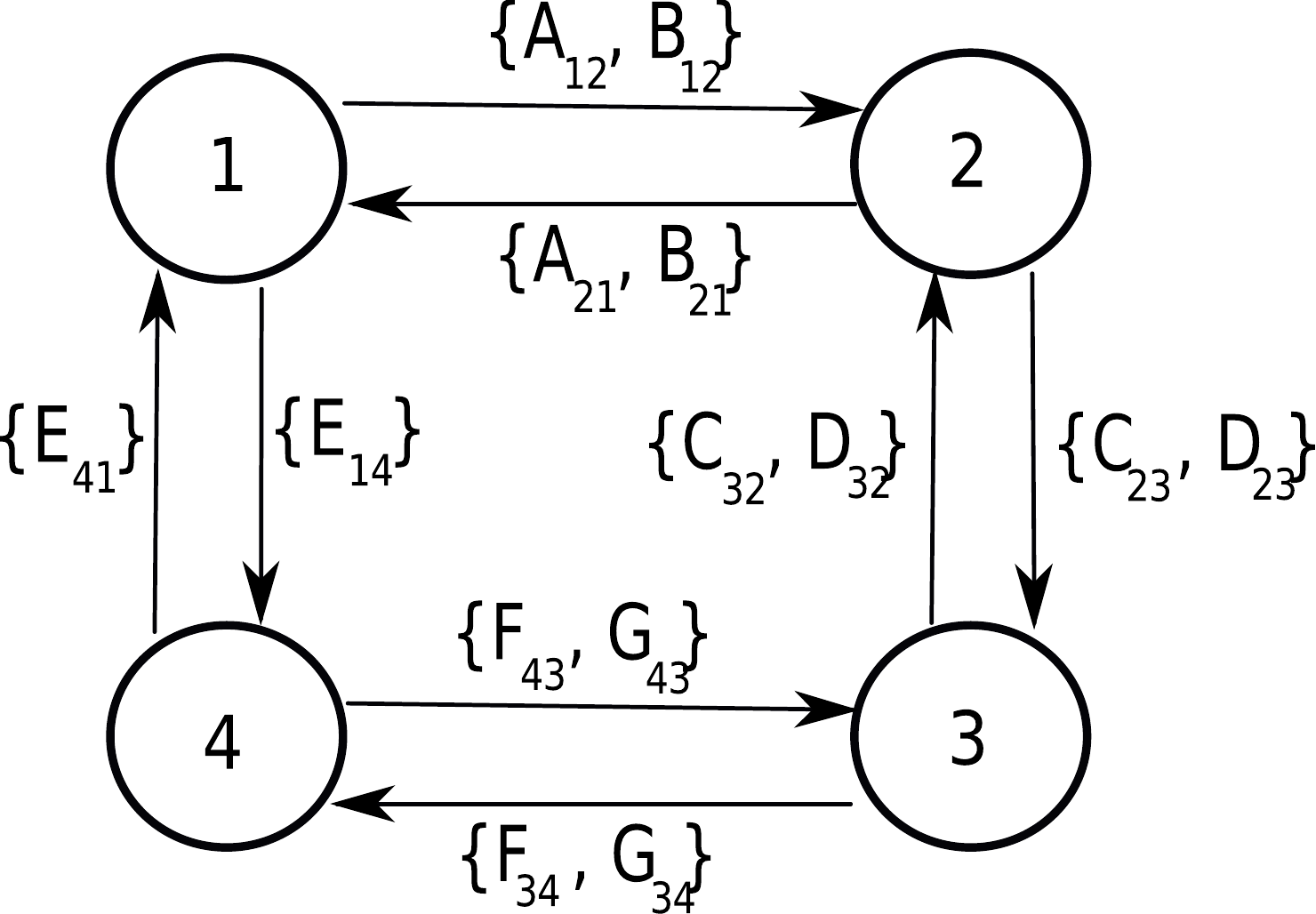}
	\label{fig:vcf51}
}
\caption{Zone-Conduit model depicting primary-conduits $C_{ij}$ and a secondary-conduit $C_{13}$ enabled by the transit zones 2 and 4 shown in (a). The firewall-paths of the primary-conduits are shown in (b).}
\label{fig:zc50}
\end{figure*}

But, the Zone-Conduit model is a logical representation, so, every firewall-path encoding in $F^*$ may not be valid.
We follow the design principles below to filter-out invalid paths:
\vspace{-1mm}
\begin{itemize}
\item{non-elementary paths in the Zone-Conduit model must be {\em excluded}.}
\vspace{-2mm}
\item{when traversing elementary Zone-Conduit paths, traffic packets shouldn't traverse a particular device interface more than once. 
So, for directed-firewalls $t_{ij},t_{jk} \in F$ the concatenation $t_{ij}t_{jk}$ is valid {\em iff} $h(t_{ij}) \neq h(t_{jk})$. 
We demonstrated the idea using the simple example in \autoref{fig:fwloop} earlier. }
\vspace{-2mm}
\item{traffic cannot flow through a non-transit zone. 
Typical zones this restriction applies to are the protected SCADA-Zones and Firewall-Zones.}
\end{itemize}

We have outlined the design principles for constructing a single firewall-path in a network.
In actuality, these principles and methods apply equally when constructing device-paths in other policy contexts.

We define single firewall-path concatenation based on the above design principles

\begin{defn}[Single firewall-path concatenation]
Take $a \in F^* \  \text{s.t.} \ a=t_{d_1d_2}t_{d_2d_3}t_{d_3d_4}...t_{d_nd_{n+1}} \ \text{where} $
$\ t_{d_id_j} \in F$. Also take $b \in F^* \ \text{s.t.} \ b=s_{g_1g_2}s_{g_2g_3}s_{g_3g_4}...s_{g_mg_{m+1}}$
$\ \text{where} \ s_{g_ig_j} \in F$. Then firewall-path concatenation from $F^* \times F^* \rightarrow F^*$ is \\\\
$ab= \begin{cases}
  t_{d_1d_2}...t^n_{d_nd_{n+1}}s_{g_1g_2}...s^m_{g_mg_{m+1}} \ \text{if} \ d_{n+1} = g_1 \\
  \ \ \text{and} \ g_i \neq d_j; \ \forall i,j, \ i >1 \\
  \ \ \text{and} \ h(t_{d_id_j}) \neq h(s_{g_kg_l}); \ \forall i,j,k,l \\  
  \phi, \ \text{otherwise}.
 \end{cases} $
 \label{v}
\end{defn}
\noindent and  $a \phi=\phi a= \phi ; \ \forall a \in F^*$

\smallskip
Concatenation in Definition 3 is a binary operation that constructs only elementary firewall-paths.

Consider two directed-firewalls $X,Y$ with policies $p_X$ and $p_Y$ that accepts packet sets $R$ and $T$.
The resultant policy of the concatenated firewall-path $XY$ has an action similar to that of sequential firewalls and can be denoted as 
$(p_X \otimes p_Y)(s) = p_{R \cap T}(s)$ where $s$ is a packet sequence.
The resultant policy action also depends on the policy context,. For instance, with traffic measurement policies, 
the outcome is similar to that of serial arbiters- \ie $p_X \otimes p_Y = p_{R \cup T}$.

We define a set of directed-firewall paths as a union of elements in $F^*$. 
Again, consider our directed-firewalls $X,Y$ from before.
If these described two distinct paths, then
the resultant policy of the path union $X \cup Y$ is that of parallel firewalls, \ie
$(p_X \oplus p_Y)(s) = p_{R \cup T}(s)$ where $s$ is a packet sequence. This result is also context dependent,
so, for traffic measurement policies, the outcome is similar to that of parallel arbiters, \ie
 $p_X \oplus p_Y = p_{R \cap T}$.

We also extend concatenation in Definition 2 to $S=\{\text{Power-set of } F^*\}$
\vspace{-1mm}
\begin{defn}[Multiple firewall-path concatenation]
Take $a,b \in S \  \text{s.t.} \ a=\{a_0,a_1...,a_x\}, b=\{b_0,b_1,...,b_y\} \ \text{where} \ a_i, b_j \in F^*$.
Then firewall-path concatenation from $S \times S \rightarrow S$ is given by

\smallskip
\indent \indent \indent \indent $ab= \{a_ib_j\}; \ \forall ij$  
 \label{v}
\end{defn}

\noindent Then, Definition 3 allows all possible sets of firewall paths
to be constructed from the union of elements in $S$ (\ie $\forall a,b \in S, \ a \cup b \in S$).
Then $(S,\cup,\cdot,\hat{0},\hat{1})$ is an idempotent semiring with

\smallskip
$\hat{0}=\phi; \ \text{empty set}$; and

$\hat{1}=\{\epsilon\}; \ \text{empty-string set}$ where
$\epsilon$ is the identity element of the concatenation operation.

\smallskip
\noindent It is important to note here that the properties of the operators $\cup$ and $\cdot$ actually dictate rules for firewall-path construction. 
So, when valid firewall-paths are built, they must be composed in a way that preserves these semantics. 
For instance, $\cup$ is commutative while $\cdot$ is not. So, the order of the directed-firewalls matter, when constructing a single firewall-path, but, are irrelevant
when constructing multiple paths.

Similarly, we can construct single and multiple device-paths for other policy contexts and
obtain the semiring result for $(S,\cup,\cdot,\hat{0},\hat{1})$ per security policies.

\vspace{-2mm}
\subsection{Mapping Policy to Arbiters}
\vspace{-2mm}
In the previous section we described how the semiring $(S,\cup,\cdot,\hat{0},\hat{1})$ constructs sets of device-paths between policy-graph endpoints.
The sequential (\ie $\otimes$) and parallel (\ie $\oplus$) policy-composition operators in \autoref{sec:abstractions} can now be used to construct the policies of these paths.
For instance,  assume we have to implement a high-level policy $p_{ij}$ on arbiters $q_{kl}$ that lie in the paths from $i$ to $j$.
All applicable device-paths from $i$ to $j$ can be constructed as per \autoref{sec:encoding} and given by
$S_{ij}=\{q_{a_1a_2}q_{a_2a_3}...q_{a_{n-1}a_n}, ....., q_{b_1b_2}q_{b_2b_3}...q_{b_{m-1}a_m}\}$.
Then, the high-level policy $p'_{ij}$ derived from the individual arbiter policies $p'_{kl}$ is
\begin{eqnarray}
  \label{eq:s2}
  p'_{ij}&=&(p'_{a_1a_2}\otimes p'_{a_2a_3}...\otimes p'_{a_{n-1}a_n}) \nonumber\\
  &&\oplus \  (....................................) \nonumber \\
  &&\oplus \ (p'_{b_1b_2} \otimes p'_{b_2b_3}...\otimes p'_{b_{m-1}b_m}).
\end{eqnarray}
\indent Mapping the intended high-level policy $p_{ij}$ to the arbiters is now a matter of finding $p'_{ij}$ for all arbiters such that
$p'_{ij}=p_{ij}$. But deriving such a mapping is non trivial because the meaning of $\oplus$ and $\otimes$ depends on the policy context.
For instance, with security policies $\oplus$ means {\em union} and $\otimes$ means {\em intersection}. So, a simple solution that supports {\em defence in depth} would be to implement the access-control policy $p_{ij}$ on every sequential firewall across all paths.\\
\indent For another instance, $\oplus$ means {\em summation} and $\otimes$ means {\em minimum} in QoS policies. So, a required bandwidth guarantee $p_{ij}$ can be split across multiple paths with sequential arbiters in each path guaranteeing only a portion of the total bandwidth.\\
\indent Similarly, with traffic measurement policies, $\oplus$ means {\em intersection} and $\otimes$ means {\em union}. So, each path must implement policy $p_{ij}$.
This can be achieved in multiple ways: (i) we could implement $p_{ij}$ on every sequential arbiter of a path for redundancy; or (ii) efficiently have only one arbiter per path implementing the policy; or
(iii) find an intermediate solution by partitioning $p_{ij}$ between sequential arbiters in a path.\\
\indent Irrespective of the policy context, the underlying requirement when mapping policy to network devices is to adhere to the semantics of \autoref{eq:s2}.
\vspace{-2mm}
\subsection{Computation of All Firewall-paths}
\label{sec:calc}
\vspace{-3mm}
We have managed to reduce the policy-to-device mapping problem to the semantics of \autoref{eq:s2} in the previous section.
We now describe how an algorithm can be developed to compute the device-paths of \autoref{eq:s2} efficiently.
Again, we demonstrate the idea using security policies and the Zone-Conduit model.

We can represent the firewall-paths of the primary-conduits using a {\em generalised Adjacency matrix $A$}. 
Here, $A(i,j)$ is the firewall-path of primary conduit $C_{ij} \in DC$. For our example in Figure \autoref{fig:vcf51}, $A=$
\smallskip

\begin{equation}
\begin{matrix}
  \begin{bmatrix}
  \overmat{$Z_1$}{\{\epsilon\}} & \overmat{$Z_2$}{\{A_{12},B_{12}\}} & \overmat{$Z_3$}{\phi} & \overmat{$Z_4$}{\{E_{14}\}} \\
  \{A_{21},B_{21}\} & \{\epsilon\} & \{C_{23},D_{23}\} & \phi \\
  \phi & \{C_{32},D_{32}\} & \{\epsilon\} & \{F_{34},G_{34}\} \\
  \{E_{41}\} & \phi & \{F_{43},G_{43}\} & \{\epsilon\} \\  
  \end{bmatrix}
\end{matrix}
\end{equation}

 Given such a matrix $A$, the solution to the problem of finding the set of all {\em valid}
 firewall-paths between zones is a matrix $A^*$ such that
 \begin{eqnarray}
  \label{eq:s3}
  A^*(i,j)&=&\{ \text{valid primary- and secondary-conduit} \nonumber \\
 &&\text{firewall-paths from zone i to j}\}
\end{eqnarray}

We developed the following right iteration algorithm\footnote{See appendix for proof.} to compute $A^*$, 
inspired by algorithms in meta-routing \cite{griffin2013}.
In doing so, we allow the node property: {\em traffic transitivity}, to be explicitly specified via a {\em zone-transitivity} matrix $T$.
\begin{eqnarray}
  \label{eq:s4}
 A^{<0>}&=&I \nonumber \\ 
 A^{<k+1>}&=&({A^{<k>} T} \ \cup \ I) A
\end{eqnarray}

The {\em zone-transitivity} matrix $T$ is defined as
\begin{equation}
  \label{eq:p}
  T(i,j)= 
  \begin{cases}
  \{\epsilon\} \ \text{if} \  i=j \ \text{and} \ \text{zone\_transitivity}(i)=1\\
  \phi, \ \text{otherwise}.
  \end{cases}
\end{equation} 

and $I$ is the multiplicative-identity matrix

\begin{equation}
  \label{eq:h}
  I(i,j)= 
  \begin{cases}
  \hat{1} \ \text{if} \  i=j \\
  \hat{0}, \ \text{otherwise}.
  \end{cases}
\end{equation} 

\smallskip
For bounded semirings we only iterate $n-1$ times to converge to $A^*$, where $n$ is the number of nodes in the Zone-Conduit model, \ie

\smallskip
\indent \indent  $A^*=A^{<n-1>}$

\smallskip
We have defined $T$, $A^*$ and the right-iteration algorithm for a security-policy context, but these can equally be defined for other policy contexts.

If we apply our algorithm in \autoref{eq:s4} to the example in Figure \autoref{fig:vcf51}, $n=4$, so, $A^*=A^{<3>}$ and let's {\em assume for simplicity that all zones are transitive}. Then

\begin{equation}
\begin{matrix}
  T=
  \begin{bmatrix}
  \{\epsilon\} & \phi & \phi & \phi \\
  \phi & \{\epsilon\} & \phi & \phi \\
  \phi & \phi & \{\epsilon\} & \phi \\
  \phi & \phi & \phi & \{\epsilon\} \\
 \end{bmatrix}
\end{matrix}
\end{equation}

and we see that $I=T$ in this instance.
\begin{eqnarray}
  \label{eq:s5}
 A^{<1>}&=&({A^{<0>} T} \ \cup \ I) A=({I T} \ \cup \ I) A \nonumber \\ 
 &=&(I \ \cup \ I)A=A \nonumber \\
 A^{<2>}&=&({AT} \ \cup \ I)A \nonumber
\end{eqnarray}

\smallskip
So, $A^*=A^{<3>}=({A^{<2>} T} \ \cup \ I) A=$

\begin{equation}
\begin{matrix}
  \begin{bmatrix}
  \{\epsilon\} & \eta & \kappa & \theta \\
  \mu & \{\epsilon\} & \nu & \beta \\
  \gamma & \lambda & \{\epsilon\} & \rho \\
  \xi & \delta & \zeta & \{\epsilon\} \\
 \end{bmatrix}
\end{matrix}
\end{equation}
where for instance
\begin{eqnarray}
  \label{eq:e}
 \kappa&=&\{A_{12}C_{23},A_{12}D_{23},B_{12}C_{23},B_{12}D_{23},E_{14}F_{43} \nonumber \\ 
 &&E_{14}G_{43}\} 
\end{eqnarray}

All valid firewall-paths from zone 1 to 3 are in $\kappa$.

\smallskip
Let's now {\em assume that zone \verb|4| is non-transitive}, then $zone\_transitivity(4)=0$, and by \autoref{eq:p}, 

\begin{equation}
\begin{matrix}
  T=
  \begin{bmatrix}
  \{\epsilon\} & \phi & \phi & \phi \\
  \phi & \{\epsilon\} & \phi & \phi \\
  \phi & \phi & \{\epsilon\} & \phi \\
  \phi & \phi & \phi & \phi \\
 \end{bmatrix}
\end{matrix}
\end{equation}

We re-calculate $A^*=$

\begin{equation}
\begin{matrix}
  \begin{bmatrix}
  \{\epsilon\} & \{A_{12},B_{12}\} & \eta & \theta \\
  \{A_{21},B_{21}\} & \{\epsilon\} & \{C_{23},D_{23}\} & \mu \\
  \gamma & \{C_{32},D_{32}\} & \{\epsilon\} & \rho \\
  \xi & \delta & \zeta & \{\epsilon\} \\
 \end{bmatrix}
\end{matrix}
\end{equation}

where for instance
\begin{eqnarray}
  \label{eq:s6}
 \eta=\{A_{12}C_{23},A_{12}D_{23},B_{12}C_{23},B_{12}D_{23}\}
\end{eqnarray}


\smallskip
The updated firewall-paths from 1 to 3 are in $\eta$. In comparison to \autoref{eq:e},
we see that the paths via zone 4 (\ie $E_{14}F_{43},E_{14}G_{43}$) have now been removed as the zone is no longer transitive.
$A^*$ can similarly be calculated for other policy contexts to determine all valid device-paths between endpoint pairs.

\smallskip
\noindent We will next describe our implementation of the algorithm in \autoref{eq:s4}. 
The implementation allows us to use these algebras to map policy to real network devices.
\vspace{-5mm}

\section{\uppercase{Implementation}}
\label{sec:imp}
\vspace{-2mm}
Our policy mapping system is depicted in \autoref{fig:map}, and we outline it's details below.
The system is currently implemented in Python, and allows mapping of high-level policies written in our
own policy specification language, to network devices. We will make the system open source in the near future.

\noindent \textbf{\em High-level security policy:} The topology-independent policy input file created using our high-level policy specification language.

\noindent \textbf{\em Network topology:} The input network topology described in the XML-based graph file format {\em GraphML} \cite{GraphML}.
The file holds information of all devices in the network and their interconnections. The crucial aspects are the details of the
topology near the policy arbiters (\eg firewalls).

\noindent \textbf{\em Map policy to network:} Compiles high-level policy to an Intermediate-Level (IL), generates the policy graph (\eg Zone-Conduit model) of the input network and computes $A^*$ as per \autoref{sec:calc} to map high-level policy to the devices (\eg firewalls) in the network.

\noindent \textbf{\em Policy-to-device map:} The primary output depicting policy breakdown by device, interface and direction.

\noindent \textbf{\em Verification output:} Secondary output specifically suited for verification (\eg policy errors).
\begin{figure}[t!]
\captionsetup{aboveskip=8pt}
\centerline{\includegraphics[scale=0.35]{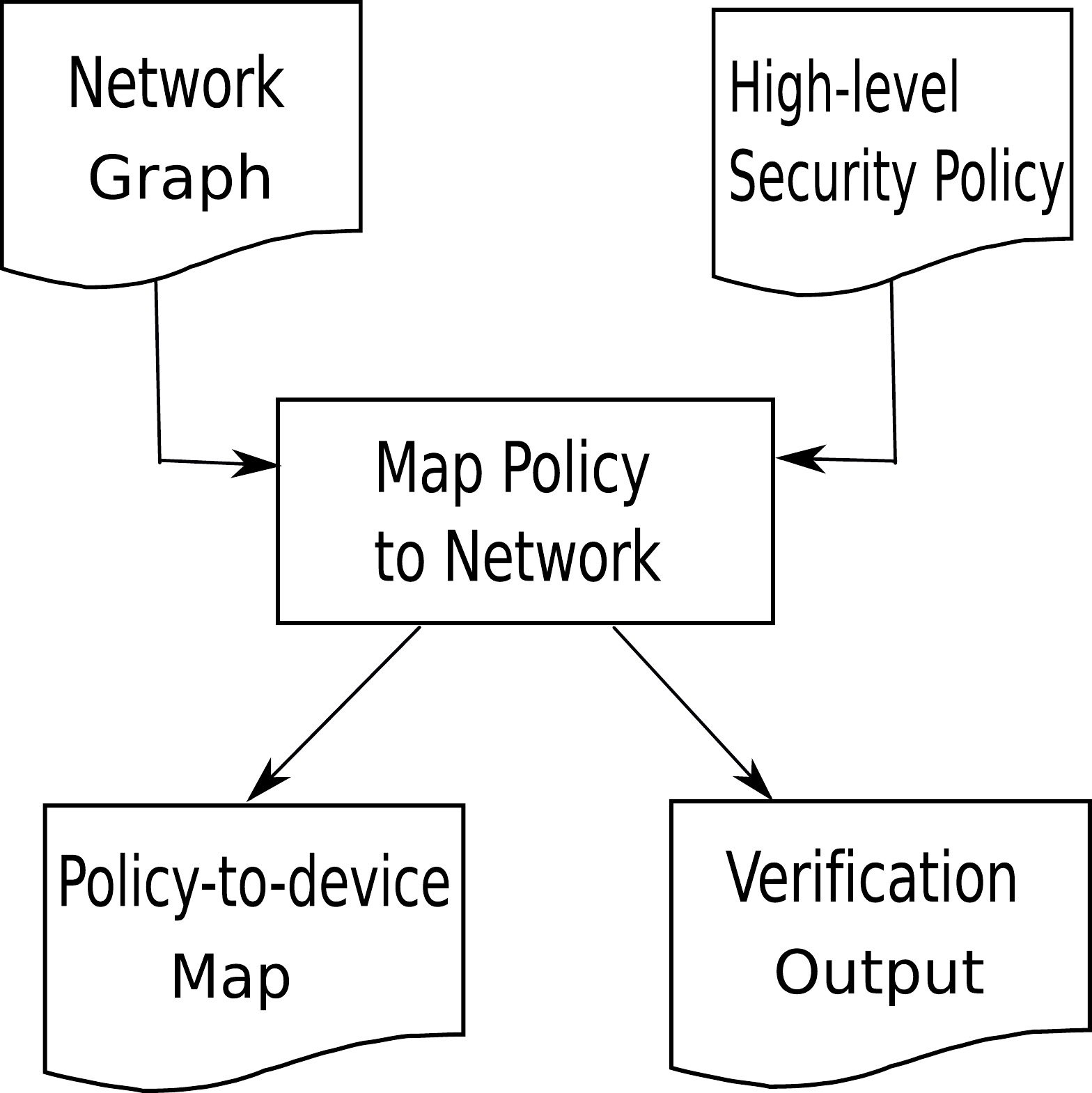}}
\caption{Policy to network-device mapping process.}
\label{fig:map}
\end{figure}

\smallskip
\noindent Our system implements a high-level policy on a network by coupling the policy to the topology instance. 
Details of the coupling steps are discussed in detail in our previous work in \cite{ranathunga2015V}.
We outline here, the steps taken to map policy to the network devices. Again, we illustrate our implementation 
using security policies and the Zone-Conduit model, but the system can likewise be used in other policy contexts.
\vspace{-2mm}
\subsection{Zone-transitivity Matrix and Adjacency Matrix Construction}
\vspace{-2mm}
Our system first builds the \textit{Zone-Firewall model}, containing the disjoint zones and their 
firewall interconnections using the input network topology. 
Additional Firewall-Zones, Abstract-Zones and Carrier-Zones \cite{ranathunga2015} are added to the model as required. 
The primary conduits are then defined to create the Zone-Conduit model. 

Next, the zone-transitivity capabilities are collated by compiling the high-level policy to IL policy.
Then the $n \times n$ zone-transitivity matrix ($T$) is constructed ($n$ is the number of zones in the network). 

Implicit mappings between the primary conduits and their firewall compositions are automatically created when generating the Zone-Conduit model. 
These mappings are leveraged to construct the adjacency matrix $A$ described in \autoref{sec:calc}.
\vspace{-2mm}
\subsection {$A^*$ Calculation}
\vspace{-2mm}
We compute $A^*$ using the right iteration algorithm in \autoref{eq:s4}.
Our implementation is given below (Algorithm 1). 
We consider a centralised implementation (given typically low $n$), but, it can also be 
distributed across multiple nodes performing parallel computations.

\begin{table*}[t]
\caption{SCADA case study summary adapted from \cite{ranathunga2015} ($*$ includes backup firewalls, $\#$ ACL rule allocation error).} 
\centering 
\begin{tabular}{c l c c c c c c c c c c} 
\hline\hline 
SUC & \multicolumn{1}{p{1cm}}{\centering Config. \\ date} & \multicolumn{1}{p{0.6cm}}{\centering Fire-walls$^*$} &  \multicolumn{1}{p{0.65cm}}{\centering Zones}  & \multicolumn{1}{p{1cm}}{\centering Conduits} &  \multicolumn{1}{p{0.8cm}}{\centering Max. \\ hosts} &
\multicolumn{1}{p{0.55cm}}{\centering ACLs} & \multicolumn{1}{p{1.25cm}}{\centering Average rules \\ per ACL } & \multicolumn{1}{p{1.1cm}}{\centering Incorrect \\ firewall$^{\#}$ } & \multicolumn{1}{p{1.1cm}}{\centering Incorrect \\ interface$^{\#}$} & \multicolumn{1}{p{1.1cm}}{\centering Incorrect \\ direction$^{\#}$} & \multicolumn{1}{p{1.1cm}}{\centering Runtime \\ (seconds)}  \\ [0.5ex] 
\hline 
1 & Sep 2011 & 3  & 7 &11 & 67580 & 8 & 237 & 15  & 13 & 19 & 40 \\ 
2 & Aug 2011 & 6 & 21 & 81 & 2794 & 12 & 16 & 3 & 2 & 5 & 70\\
3 & Oct 2011 &  4 & 10 & 17 & 886 & 8 &  6 & 2 & 1 & 4 & 43\\ 
4 & Mar 2011 &  3 & 9 & 16 & 2038 & 3 & 80 & 5 & 12 & 13 & 61\\
5 & Apr 2015 &  3 & 12 & 19 & 2664 & 12 & 677 & 15 & 8 & 26 & 47\\
6 & Apr 2015 &  3 & 13 & 21 & 3562 & 8 & 1034 & 21 & 15 & 19 & 63\\
7 & Jul 2015 &  6  & 15 & 22 & 3810 &  17 & 724 &  9 & 5 & 17 & 49\\
\hline 
\end{tabular}
\label{table:suc-summary} 
\end{table*}

\begin{algorithm}[t]
\caption{Right-iteration algorithm to compute $A^*$.
The additive and multiplicative operators of Semiring $S$ are $\cup, \cdot$ (\autoref{sec:encoding}).
The additive and multiplicative identities of $S$ are $\hat{0}, \hat{1}$. 
The multiplicative-identity matrix and the zone-transitivity matrix are $I$ and $T$.
The set of zones in the network is $Z$.}
\label{alg:right_iter}
\begin{algorithmic}[1]
\Procedure{RightIteration}{$(\cup, \cdot, \hat{0}, \hat{1},A)$}
\State $R_0 \gets \text{I}$
\State $k \gets \text{1}$

\While{$k \leq |Z| -1$}
	\ForEach{$i \in Z$}
		\ForEach{$j \in Z$}
		      \If{i=j} 
		      	  \State $R_{k+1}(i,j) \gets I(i,j)$
		      \Else
		          \State $R_{k+1}(i,j) \gets \hat{0}$
		          \ForEach{$q \in Z$}
		               \State $s \gets R_k(i,q) \cdot T(q,j)$
		               \State $R_{k+1}(i,j) \gets R_{k+1}(i,j) \cup s$		               
		               \State $R_{k+1}(i,j) \gets R_{k+1}(i,j) \cup I(i,q)$
		               \State $R_{k+1}(i,j) \gets R_{k+1}(i,j) \cdot A(q,j)$
		          \EndFor
		      \EndIf
		\EndFor
	\EndFor
	\State $k \gets \text{k+1}$
 \EndWhile
\EndProcedure
\end{algorithmic}
\end{algorithm}
Lines 2,3,10 initialise $R_0$, $k$ and $R_{k+1}(i,j)$ while lines 5,6,11 range over the set of zones $Z$ in the network.
At each step $k$, we store the result $A_{k+1}$ in $R_{k+1}$.

We can see that the above algorithm has time complexity $O(n^4)$. This value suggests that the algorithm performance decreases exponentially as the number of zones in the network ($n$) increases.
So, specifying policy per individual host (\ie zone size=1 so, large $n$) in {\em top-down} configuration makes policy mapping {\em extremely inefficient}. 
It proves more efficient to create network groups (\eg by subnet) and specify policy between them, so, a reasonably low value can be maintained for $n$.
For instance, we will see in \autoref{sec:example} that in a SCADA network typically $n< 25$. 
Our mapping algorithm runs on average in 53 seconds 
on a standard desktop computer for each of these networks.

In calculating $A^*$, we considered all valid paths between two zones (hence time complexity of $O(n^4)$). The decision allows us to permit or block traffic along all possible communication paths between two zones, providing redundancy and {\em defence-in-depth} in the network. 
We also map policy uniformly to every firewall along a single-path, further boosting defence-in-depth.
These decisions collectively create a more robust defence against cyber attacks. 

However, in other policy contexts, it may be useful to select a {\em subset} of all valid paths or even just a single path (\eg shortest path) instead.
Doing so, could improve the time complexity of \autoref{eq:s4} (\eg selecting shortest paths would yield $O(n^3)$).
This path pruning can be done, for instance, by incorporating a sparse matrix in \autoref{eq:s4}.
It may also be efficient to apply policy only on edge arbiters of a path, in some contexts (\eg when enabling traffic measurements).
\vspace{-7mm}

\section{\uppercase{A series of case studies}}
\label{sec:example}
\vspace{-3mm}
We now demonstrate the use of our policy-mapping algebras,
through seven real SCADA-firewall configuration case studies summarised in
\autoref{table:suc-summary}.  The data was provided by the authors of
\cite{ranathunga2015}.

The seven Systems Under Consideration (SUCs), involve various
firewall architectures and models. We use them to demonstrate several
properties, most notably that the computational complexity of our policy-to-device mapping algorithm
is tractable.

An important feature depicted in \autoref{table:suc-summary} is the {\em number of security zones} in each network.
This number is {\em small} (\ie $\leq 21$) relative to the maximum (potential) number of hosts per network (\ie $\leq 67580$).

This is to be expected, a zone groups a set of hosts or subnets with identical policies.
If every host had a distinct policy then a large number of firewalls would be needed to enforce a real separation between the hosts, making it impractical.
By grouping hosts into zones, we reduce policy complexity, so their specification becomes easier and less error-prone.\\
\indent The typically small zone-count, makes Algorithm 1 (\autoref{sec:imp}) computationally feasible.
For instance, the worse-case computational complexity involving zones is $O(21^4)$. In comparison, 
the worse-case computational complexity involving hosts is $O(67580^4)$! \\
\indent The average time taken by our system to map high-level policy to network devices is 53 seconds per case study.
This is when the system is run on a standard desktop computer
(\eg Intel Core CPU 2.7-GHz computer with 8GB of RAM running Mac OS X).\\
\indent We identified incorrectly mapped (\ie assigned) ACL rules in each case study by parsing the firewall configurations as per \cite{ranathunga2015}.
These errors were then classified into three groups: {\em incorrect-firewall, incorrect-interface and incorrect-direction} errors (\autoref{table:suc-summary}).
{\em Incorrect-firewall} errors are ACL rules that are assigned to the wrong firewall to begin with, \ie the desired traffic filtering could not be achieved by placing the ACL rule in any of that firewall's interfaces. 
{\em Incorrect-interface} errors are ACL rules that are assigned to the correct firewall but to the wrong firewall-interface, \ie the desired traffic filtering could not be achieved by assigning the rule inbound or outbound of that firewall-interface.
{\em Incorrect-direction} errors comprise of ACL rules that are assigned to the correct firewall and firewall-interface, but in the wrong direction (\eg outbound instead of inbound).\\
\indent As (\autoref{table:suc-summary}) suggests, on average there were 10 ACL rules allocated to the wrong firewall, 8 rules allocated to the wrong firewall-interface and 15 rules allocated in the wrong interface-direction, per SCADA case study.
Hence, the intended security policy was {\em not correctly implemented} in any of these mission-critical networks! Needless to say, what chance do we have in preventing
financial loss or moreover, the loss of human lives when our critical infrastructure have incorrectly deployed security policies to begin with?\\
\indent We automatically mapped the high-level policy in each case to it's network using our system. 
The generated policy rules were then checked for incorrect allocations.
There were {\em zero incorrectly allocated policy rules}, when the policy was mapped to the firewalls using our algebras!
So, the intended high-level security policies were now correctly deployed to the firewalls.
By mapping policy to the correct set of firewalls, we reduce vulnerability of these SCADA networks to cyber attack
and prevent catastrophic outcomes that result from such attacks.
\vspace{-6mm}

\section{\uppercase{Conclusions}}
\label{sec:conclusion}
\vspace{-2mm}
There are various obstacles that hinder the precise mapping of policies to network devices.
Most prominent is the lack of decoupling between policy and network which makes
policy sensitive to network-intricacies and vendor changes.
To compound the problem, policies can also have complex
semantics including node and link properties.

Our research addresses these challenges and proposes
a mathematical foundation for mapping policies to network-devices.
We use it to deploy real-world security policies to network firewalls
provably correctly, so, that administrators can be confident of the protection provided by their policies for their networks.
\vspace{-7mm}





\nocite{howe1996}
\nocite{foster2010}
\nocite{reich2013}
\nocite{soule2014}
\nocite{prakash2015}
\nocite{bartal1999}
\nocite{griffin2013}
\nocite{GraphML}
\nocite{anderson2014}
\nocite{cisco2014b}
\nocite{gutz2012}
\nocite{griffin2013}

\nocite{patrick1995}
\nocite{byres2005}
\nocite{isa2007}
\nocite{ranathunga2015}
\nocite{ranathunga2015T}
\nocite{ranathunga2015V}
\nocite{ranathunga2015P}
\nocite{harary1960}
\nocite{yuan2006}
\nocite{al2004}
\nocite{guttman2005}
\nocite{twidle2009}
\nocite{zhao2008}
\nocite{avelsgaard1990}
\nocite{Johnson:2005:NC:1077464.1077476}

\addcontentsline{toc}{section}{List of Acronyms}
\begin{acronym}[TDMA]
        \setlength{\itemsep}{-\parsep}
        \acro{IANA}{Internet Assigned Numbers Authority}
        \acro{OOP}{Object Oriented Programming}
        \acro{BNF}{Backus-Naur Form} 
        \acro{MS-SQL}{Microsoft SQL Server}
        \acro{IL}{Intermediate Level}
        \acro{SCADA}{Supervisory Control and Data Acquisition}
        \acro{COTS}{Commodity Off-The-Shelf}
        \acro{TCP}{Transmission Control Protocol}
        \acro{IP}{Internet Protocol}
        \acro{ISA}{International Society for Automation}
        \acro{ASA}{Adaptive Security Appliance}
        \acro{PIX}{Private Internet eXchange}
        \acro{IOS}{IOS}
        \acro{WAN}{Wide Area Network}
        \acro{CLI}{Command Line Interface}
        \acro{DoS}{Denial of Service}         
        \acro{ACLs}{Access Control Lists}
        \acro{ACE}{Access Control Entry}
        \acro{ACEs}{Access Control Entries}
        \acro{CN}{Corporate Network}
        \acro{CZ}{Corporate-Zone}
        \acro{SZ}{SCADA-Zone}
        \acro{MZ}{Management-Zone}
        \acro{FWZ}{Firewall-Zone}
        \acro{IZ}{Internet-Zone}  
        \acro{IDS}{Intrusion Detection Systems} 
        \acro{DMZ}{Demilitarised-Zone}
        \acro{UDP}{User Datagram Protocol}
        \acro{DNS}{Domain Name System}
        \acro{HTTP}{Hypertext Transfer Protocol}
        \acro{HTTPS}{Hypertext Transfer Protocol Secure}
        \acro{FTP}{File Transfer Protocol}
        \acro{NTP}{Network Time Protocol}
        \acro{SSH}{Secure Shell Protocol}
        \acro{SMTP}{Simple Mail Transfer Protocol}
        \acro{SNMP}{Simple Network Management Protocol}
        \acro{ICMP}{Internet Control Message Protocol}
        \acro{DCOM}{Distributed Component Object Model}        
        \acro{NAT}{Network Address Translation}
        \acro{VPN}{Virtual Private Network}
        \acro{VPNs}{Virtual Private Networks}
        \acro{ANSI}{American National Standards Institute}
        \acro{OSI}{Open System Interconnection}
        \acro{SUC}{System Under Consideration}
        \acro{SUCs}{Systems Under Consideration}
        \acro{VM}{Virtual Machine}
        \acro{ACL}{Access Control List}
        \acro{UML}{User Mode Linux}
        \acro{LoC}{Lines of Code}
        \acro{PLCs}{Programmable Logic Controllers}
        \acro{IL}{Intermediate Language }
        \acro{SDN}{Software Defined Networking}
        
\end{acronym}

\bibliographystyle{apalike}
{\small
\bibliography{myBibliography}}


\vspace{-7mm}
\section*{Appendix}
\vspace{-2mm}
\begin{thm}[$A^*$ calculation]
  \label{thm:alg}  
  $A^*$ can be calculated using the right iteration algorithm
  
  \smallskip
   $A^{<k+1>}=(A{^{<k>} T} \ \cup \ I) A$ where $A^{<0>}=I$.
\end{thm}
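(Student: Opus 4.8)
The plan is to identify, by induction on the iteration index $k$, the matrix $A^{<k>}$ with an explicit combinatorial object: $A^{<k>}(i,j)$ is exactly the set of all \emph{valid} firewall-paths from zone $i$ to zone $j$ that traverse at most $k$ primary conduits, where (following the three design principles of \autoref{sec:encoding}) a valid firewall-path is an elementary path in the Zone-Conduit graph whose transit (interior) zones are all transitive and along which no physical firewall is used twice. Granting this characterisation, the theorem is immediate: every valid firewall-path is elementary, hence visits at most $n$ zones and so traverses at most $n-1$ primary conduits; therefore the set of valid paths of conduit-length $\le k$ no longer grows once $k\ge n-1$, so $A^{<k>}=A^{<n-1>}$ for all such $k$, and this common value is precisely the matrix of all valid primary- and secondary-conduit firewall-paths, i.e.\ $A^*$.

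I would first fix the convention, forced by the diagonals of $A$ and of $I$, that the only valid firewall-path from a zone to itself is the empty word $\{\epsilon\}$ (there are no others, since an elementary path cannot return to its origin). The base case $k=0$ is then just $A^{<0>}=I$: it carries $\hat{1}=\{\epsilon\}$ on the diagonal and $\hat{0}=\phi$ off it, precisely the valid paths traversing $0$ conduits.

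For the inductive step I would expand the recursion entrywise over $S$ --- legitimate, since $S$ is an idempotent semiring (\autoref{sec:encoding}), so matrix products are associative and left-distribute over $\cup$ ---
\[
A^{<k+1>}(i,j)=\bigcup_{q}\Bigl(\bigl(A^{<k>}(i,q)\,T(q,q)\bigr)\cup I(i,q)\Bigr)\,A(q,j).
\]
Since $T$ is diagonal, $A^{<k>}(i,q)\,T(q,q)$ equals $A^{<k>}(i,q)$ when $q$ is transitive and $\phi$ otherwise, while $I(i,q)=\hat{1}$ only for $q=i$; so the $q=i$ summand contributes $A(i,j)$ (the one-conduit paths) and every other summand contributes $A^{<k>}(i,q)\,A(q,j)$ with $q$ transitive. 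Unfolding $A^{<k>}(i,q)\,A(q,j)$ through the single- and multiple-firewall-path concatenation: for a valid path $\pi\colon i\to q$ (inductive hypothesis) and a realization $c\in A(q,j)$ of the primary conduit $C_{qj}$, the product $\pi c$ is $\phi$ unless $j$ is fresh with respect to the zones of $\pi$ and the firewalls of $c$ are $h$-distinct from those of $\pi$ --- exactly the conditions under which $\pi c$, whose newly interior zone $q$ is transitive thanks to the $T(q,q)$ factor, is again a valid firewall-path $i\to j$ traversing at most $k+1$ conduits. Conversely, any valid $\sigma\colon i\to j$ traversing $1\le s\le k+1$ conduits factors uniquely as $\sigma=\pi c$, with $c$ its last primary conduit (from some zone $q$) and $\pi$ its prefix, a valid path $i\to q$ traversing $s-1\le k$ conduits; here $q=i$ when $s=1$, and otherwise $q$ is an interior --- hence transitive --- zone of $\sigma$, so $\sigma$ is produced by the matching summand. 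This establishes the characterisation for $k+1$ and with it the theorem.

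The main obstacle is precisely this inductive step: one has to check, clause by clause, that the side-conditions hard-wired into the firewall-path concatenation match the definition of a valid firewall-path --- elementarity, the transit-zone transitivity contributed by the interleaved factor $T$, and the no-repeated-firewall requirement mediated by $h$ --- and that the right-append shape of the recursion reaches every valid path exactly once through the ``prefix plus last conduit'' decomposition. The remaining ingredients --- the semiring bookkeeping and the elementary-path bound of $n-1$ conduits that gives termination at $A^{<n-1>}$ --- are routine once this correspondence is pinned down.
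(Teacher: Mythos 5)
Your proof is correct and follows essentially the same route as the paper's appendix: an induction on the number of hops showing that $A^{<k>}(i,j)$ collects exactly the valid firewall-paths of at most $k$ conduits, with the inductive step given by the prefix-plus-last-conduit decomposition, the interleaved diagonal factor $T$ enforcing transitivity of the newly interior zone, and the $I$ term handling the $q=i$ (one-conduit) case. You additionally make explicit the converse inclusion (every valid path arises from the decomposition) and the termination bound $A^*=A^{<n-1>}$ via elementarity, which the paper asserts in the main text rather than re-deriving in its proof.
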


$T$ and $I$ are the Zone-transitivity matrix and the multiplicative-identity matrix (of semiring $(S,\cup,\cdot,\hat{0},\hat{1})$) respectively.
\begin{proof}

Let's check that the result holds for $k=0$ (\ie valid firewall paths between zones up to $(0+1)$ hop).
\begin{eqnarray}
  \label{eq:s10}
  LHS&=&A^{<0+1>}=A^{<1>} \nonumber \\
  RHS&=&( A{^{<0>} T} \ \cup \ I) A=( {I T} \ \cup \ I) A \nonumber \\
 &=&(T \ \cup \ I) A=I A=A \nonumber
\end{eqnarray}
\indent $A^{<1>}=A$ is true since all valid single-hop firewall paths are represented by $A$.

Let's assume the result holds when $k=n$, \ie
 $A^{<n+1>}=( (A{^{<n>} T}) \cup \ I) A$.

So, $A^{<n+1>}$ holds all valid firewall paths between the zones of up to $(n+1)$ hops.
Then $A^{<n+1>}(i,j)$ represents all valid firewall paths of up to $(n+1)$ hops, between zones $i$ and $j$ (see \autoref{fig:vc}). 
So, valid firewall paths up to $(n+2)$ hops from $i$ to $k$ via $j$ are given by
\begin{equation}
(t_{ik})_j=A^{<n+1>}(i,j) A(j,k)
\end{equation}

\begin{figure}[h!]
\captionsetup{aboveskip=8pt}
\centerline{\includegraphics[scale=0.32]{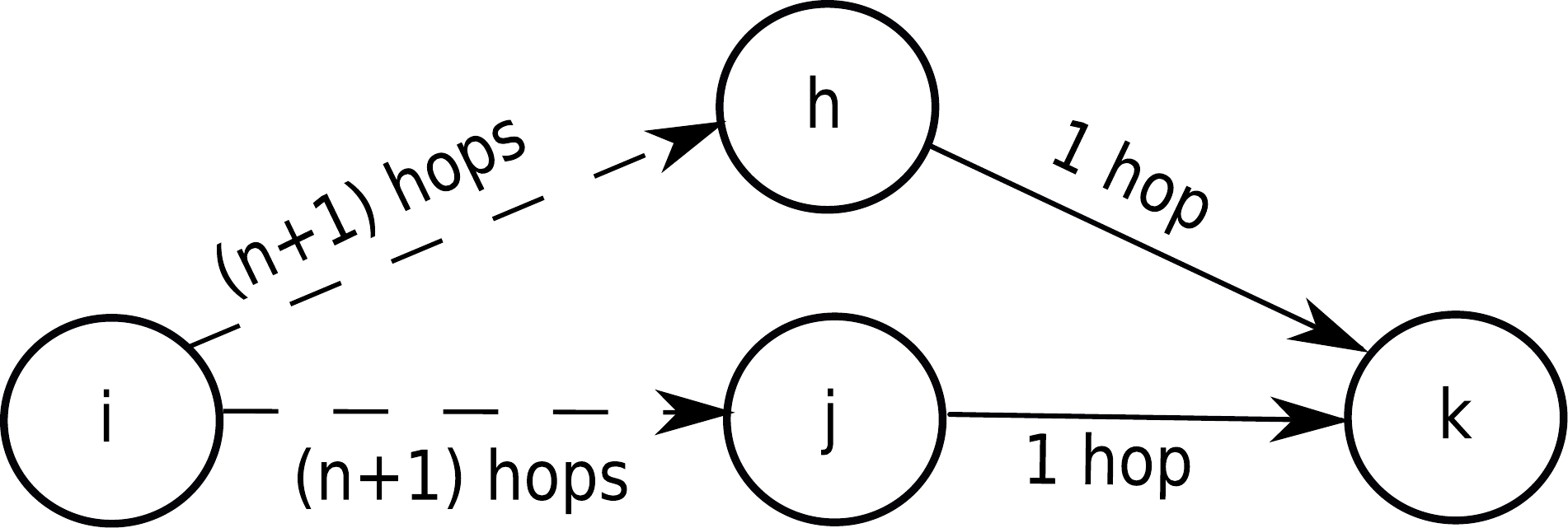}}
\caption{.}
\label{fig:vc}
\end{figure}

But zone $j$ may not be transitive, so we modify the above equation to account for this
\begin{equation}
(t_{ik})_j=A^{<n+1>}(i,j) T(j,j) A(j,k)
\end{equation}

But, when $i=j$ the above equation yields $\phi$, whereas we expect $\{\epsilon\}$ to be the self-firewall path.
Again, we update the equation
\begin{equation}
(t_{ik})_j=\Big[ \ A^{<n+1>}(i,j) T(j,j)  \ \cup \ I(i,j) \ \Big] A(j,k)
\end{equation}

Then, all valid firewall paths up to (n+2) hops from $i$ to $k$ (via $j,h...etc.$) are given by
\begin{eqnarray}
  \label{eq:s10}
  A^{<n+2>}(i,k)=&& \nonumber \\
   \bigcup_{\forall j} \Big[ \ A^{<n+1>}(i,j) T(j,j)\cup \ I(i,j) \ \Big] A(j,k); \ \forall i,k
\end{eqnarray}

We simplify the equation further
\begin{eqnarray}
  \label{eq:s10}
  RHS&=&\bigcup_{\forall j} \Big[ \ A^{<n+1>}(i,j) T(j,j)  A(j,k) \Big] \ \cup \nonumber \\
     &&\bigcup_{\forall j} I(i,j) A(j,k) \nonumber
\end{eqnarray}
\vspace{-2mm}
\begin{eqnarray}
  \label{eq:s10}
  \text{But,} \ A^{<n+1>}(i,j) T(j,j) = \bigcup_{\forall s}\Big[A^{<n+1>}(i,s) T (s,j)\Big] \nonumber
\end{eqnarray}
\vspace{-2mm}
\begin{eqnarray}
  \label{eq:s10}
  \text{So,} \ &&\bigcup_{\forall j}\Big[A^{<n+1>}(i,j) T(j,j)  A(j,k)\Big]  \nonumber \\
  &=&\bigcup_{\forall j}\ \Big[A^{<n+1>} T\Big](i,j) A(j,k) \nonumber \\
  &=&\Big[A^{<n+1>} T A\Big](i,k) \nonumber
\end{eqnarray}

\smallskip
Also, $\ \bigcup_{\forall j}\Big[I(i,j) A(j,k)\Big]=\Big[I A\Big](i,k)$
\begin{eqnarray}
  \label{eq:s10}
  \text{Hence,} \ A^{<n+2>}(i,k)&=&\Big[A^{<n+1>} TA\Big](i,k) \ \cup \nonumber \\
  && \Big[I A\Big](i,k); \ \forall i,k \nonumber
\end{eqnarray}

We can generalise the above to the matrices
\begin{eqnarray}
  \label{eq:s10}
  A^{<n+2>}&=&A^{<n+1>} T A \ \cup \ I A \nonumber \\
  &=&(A{^{<n+1>} T} \ \cup \ I) A. \nonumber
\end{eqnarray}
\end{proof}

\vfill
\end{document}